\def\BibTeX{{\rm B\kern-.05em{\sc i\kern-.025em b}\kern-.08em
    T\kern-.1667em\lower.7ex\hbox{E}\kern-.125emX}}
\newtheorem{theorem}{Theorem}
\newtheorem{lemma}{Lemma}
\newtheorem{definition}{Definition}
\newtheorem{claim}{Claim}
\newtheorem{claimproof}{Proof of Claim}
\newcommand{\setd}{\ensuremath{\mathcal{D}}}
\newcommand{\bs}[1]{\boldsymbol{#1}}
\newcommand{\RM}[1]{\MakeLowercase{\romannumeral #1{}}}
\definecolor{calpolypomonagreen}{rgb}{0.12, 0.3, 0.17}
\newcounter{remarkcount}
\newenvironment{remark}{\refstepcounter{remarkcount}\begin{trivlist}\item \textbf{Remark \theremarkcount.}}{\end{trivlist}}
\newcommand{\circlearrow}{}% just in case
\DeclareRobustCommand{\circlearrow}{%
  \mathrel{\vphantom{\rightarrow}\mathpalette\circle@arrow\relax}%
}
\newcommand{\circle@arrow}[2]{%
  \m@th
  \ooalign{%
    \hidewidth$#1\circ\mkern1mu$\hidewidth\cr
    $#1-$\cr}%
}
\let\emptyset\varnothing
\newcommand{\mbf}{\mathbf}
\newcommand{\mc}{\mathcal}
\newcommand{\mbb}{\mathbb}
\theoremstyle{definition}
\theoremstyle{remark}
\def\BibTeX{{\rm B\kern-.05em{\sc i\kern-.025em b}\kern-.08em
    T\kern-.1667em\lower.7ex\hbox{E}\kern-.125emX}}
\begin{document}
\onecolumn
\thispagestyle{plain}
\pagenumbering{arabic}
\title{A General Formula for Uniform Common Randomness Capacity \\
\thanks{H. Boche was supported in part by the German Federal Ministry of Education and Research (BMBF) within the national initiative on 6G Communication Systems through the research hub 6G-life under Grant 16KISK002, within the national initiative on Post Shannon Communication
(NewCom) under Grant 16KIS1003K. He was further supported by the German Research Foundation (DFG) within Germany’s Excellence Strategy EXC-2092–390781972.
M. Wiese was supported by the Deutsche Forschungsgemeinschaft (DFG, German
Research Foundation) within the Gottfried Wilhelm Leibniz Prize under Grant BO 1734/20-1, and within Germany’s Excellence Strategy EXC-2111—390814868 and EXC-2092 CASA-390781972. C.\ Deppe was supported in part by the German Federal Ministry of Education and Research (BMBF) under Grant 16KIS1005 and in part by the German Federal Ministry of Education and Research (BMBF) within the national initiative on 6G Communication Systems through the research hub 6G-life under Grant 16KISK002.
R. Ezzine was supported by the German Federal Ministry of Education and Research (BMBF) under Grant 16KIS1003K. Thanks goes also to Prof. Watanabe and Prof. Tyagi for discussing some details in this work.}}

\author{
\IEEEauthorblockN{Rami Ezzine\IEEEauthorrefmark{1}, Moritz Wiese\IEEEauthorrefmark{1}\IEEEauthorrefmark{3}, Christian Deppe\IEEEauthorrefmark{2}\IEEEauthorrefmark{4} and Holger Boche\IEEEauthorrefmark{1}\IEEEauthorrefmark{3}\IEEEauthorrefmark{4}}
\IEEEauthorblockA{\IEEEauthorrefmark{1}Technical University of Munich, Chair of Theoretical Information Technology, Munich, Germany\\
\IEEEauthorrefmark{2}Technical University of Munich, Institute for Communications Engineering,  Munich, Germany\\
\IEEEauthorrefmark{3}CASA -- Cyber Security in the Age of Large-Scale Adversaries–
Exzellenzcluster, Ruhr-Universit\"at Bochum, Germany\\
\IEEEauthorrefmark{4}BMBF Research Hub 6G-life, Munich, Germany\\
Email: \{rami.ezzine, wiese, christian.deppe, boche\}@tum.de}
}
\maketitle
\thispagestyle{plain}
\pagenumbering{arabic}
\pagestyle{plain}
\begin{abstract}
We generalize the uniform common randomness capacity formula, initially established by Ahslwede and Csiszár for a two-source model for common randomness generation from independent and identically distributed (i.i.d.) discrete sources with unidirectional communication over rate-limited discrete noiseless channels to the case when the one-way communication is over arbitrary single-user channels. In our proof, we will make use of the transmission capacity formula established by Verdú and Han for arbitrary point-to-point channels.
\end{abstract}

\begin{IEEEkeywords}
Uniform common randomness capacity, correlated sources, transmission capacity
\end{IEEEkeywords}

\section{Introduction}
The concept of non-secret common randomness generation was initially introduced by Ahlswede and Csiszár in \cite{part2} for its high relevance in the theory of message identification \cite{identification}, since it may allow a significant increase in the identification capacity of channels\cite{part2,trafo,Ahlswede2021}.

The identification scheme is conceptually different from the classical transmission scheme proposed by Shannon \cite{shannon}. In fact, in the identification framework, the sender is not interested in what the received message is. He rather wants to know if a specific message of special interest to him has been sent or not. Naturally, the sender has no knowledge of that message. Otherwise, the problem would be trivial.
The identification  scheme has striking applications  in digital watermarking \cite{Moulin,watermarkingahlswede, watermarking}. It is also highly relevant in many practical applications which require robust and ultra-reliable low latency information exchange including several machine-to-machine
and human-to-machine systems \cite{applications},  6G communication systems \cite{6Gcomm} and industry 4.0 \cite{industrie40}. It is therefore expected that common randomness will be an important resource for future communication systems \cite{6Gcomm}\cite{6Gpostshannon} and, in particular, that resilience requirements \cite{6Gcomm} and security requirements\cite{semanticsecurity} can also be achieved on the basis of common randomness. These requirements are again of particular importance for achieving trustworthiness, which represents a key challenge for future communication systems due to modern applications\cite{6Gandtrustworthiness}.
For this reason, common randomness generation for future communication networks is an important research question in large 6G research projects \cite{researchgroup1}\cite{researchgroup2}.

Several models for CR generation were introduced in \cite{part2} including the two-source model, which consists of 
two terminals who observe independent and identically distributed (i.i.d.) samples from a known
discrete random source, and who wish to generate a shared random variable with the largest possible probability of agreement by communicating as little as possible.

In \cite{part2}, the authors considered the case when the terminals are allowed to communicate over a classical perfect channel as well as the case when the terminals communicate over a  classical noisy channel. A single-letter expression for the CR capacity 
  for both cases was derived in \cite{part2}. 
  The CR capacity is defined as the maximum amount of shared random variable per channel use. It has been shown in \cite{part2} that the CR capacity can be always  attained with nearly uniform random variables. This is the most convenient form of CR  particularly in the context of key generation \cite{part1}\cite{Maurer}, where further constraints on secrecy are imposed. In our work, however, we will not impose any secrecy requirements.
Later, the results on CR capacity  have been extended in \cite{CRgaussian} to Gaussian channels for their practical relevance in many communication situations such as satellite and deep space
communication links\cite{practicalgaussian}, wired and wireless communications, etc.

  We consider the two-source model for uniform common randomness (UCR) generation in which the two terminals aim to agree on a common uniform or nearly uniform random variable with high probability by communicating as little as possible over an arbitrary single-user channel.
The main contribution of our work consists in establishing a general formula for the UCR capacity  that holds for arbitrary point-to-point channels. In our proof of the UCR capacity formula, we will make use of a well-known result of \cite{verduhan}, which is a general formula  characterizing the transmission capacity of arbitrary channels based on the inf-information rate between the channel inputs and outputs. It is worth-mentioning that no further assumption on stationarity, ergodicity or any kind of  information stability is imposed.  
 For ease of notation, we will assume in the proof of the UCR capacity that the channel input and output alphabets are finite. However, our result does not depend on that assumption. 

\textit{Outline:} The rest of the paper is structured as follows. In Section \ref{sec2}, we present our system model for CR generation, review the definition of an achievable transmission and UCR rate  and present our main result. Section \ref{converseproof} is devoted to the proof of the converse of the UCR capacity, where we will use a change of measure argument introduced in \cite{strongconverse}. In Section \ref{directpart}, we prove the achievability part by extending the UCR generation scheme introduced by Ahlswede and Csiszár to arbitrary single-user channels. Section \ref{conclusion} contains concluding remarks and proposes a potential future work in this field. 

\textit{Notation:} Throughout the paper,  $\log$ is taken to  base 2 and $\ln$ refers to the natural logarithm. For any set $\mc E,$ $\mc E^c$ refers to its complement.  For any random variable $X$ with distribution $P_X,$ $\text{supp}(P_X)$ refers to its support. 
\section{System Model and Definitions}
\label{sec2}
\subsection{System Model}
\label{systemmodel}
Let a discrete memoryless multiple source (DMMS) $P_{XY}$ with two components, with  generic variables $X$ and $Y$ on alphabets $\mathcal{X}$ and $\mathcal{Y}$, respectively, be given. The DMMS emits i.i.d. samples of $(X,Y).$
Suppose that the outputs of $X$ are observed only by Terminal $A$ and those of $Y$ only by Terminal $B.$ We further assume that the joint distribution of $(X,Y)$ is known to both terminals.
Terminal $A$
can communicate with Terminal $B$ over an arbitrary single-user channel $\bs W=\{W_n: \mc T^n \rightarrow \mc Z^n\}_{n=1}^{\infty},$ defined as an arbitrary sequence of $n$-dimensional conditional distributions $W_n$ from $\mc T^n$ to $\mc Z^n$, where $\mc T$ and $\mc Z$ are the  input and output alphabets, respectively.
There are no other resources available to any of the terminals. 
\begin{definition}
A CR-generation protocol of block-length $n$ consists of:
\begin{enumerate}
    \item A function $\Phi$ that maps $X^n$ into a random variable $K$ with alphabet $\mathcal{K}$ satisfying $\lvert \mc K \rvert \geq 3$ generated by Terminal $A.$
    \item A function $\Lambda$ that maps $X^n$ into the channel input sequence $T^n=(T_1,\hdots,T_n)\in \mc T^n.$
    \item A function $\Psi$ that maps $Y^n$ and the channel output sequence $Z^n=(Z_1,\hdots, Z_n)\in \mc Z^n$ into a random variable $L$ with alphabet $\mathcal{K}$ generated by Terminal $B.$
\end{enumerate}
Such a protocol induces a pair of random variables $(K,L)$ whose joint distribution is determined by $P_{XY}$ and by the channel $\bs W$. Such a pair of random variables $(K,L)$ is called permissible.
This is illustrated in Fig. \ref{CRprotocol}.
\end{definition}
\begin{figure}
\centering
\tikzstyle{block} = [draw, rectangle, rounded corners,
minimum height=2em, minimum width=2cm]
\tikzstyle{blockchannel} = [draw, top color=white, bottom color=white!80!gray, rectangle, rounded corners,
minimum height=1cm, minimum width=.3cm]
\tikzstyle{input} = [coordinate]
\usetikzlibrary{arrows}
\begin{tikzpicture}[scale= 1,font=\footnotesize]
\node[blockchannel] (source) {$P_{XY}$};
\node[blockchannel, below=4cm of source](channel) { Channel $\bs W$};
\node[block, below left=2.2cm of source] (x) {Terminal $A$};
\node[block, below right=2cm of source] (y) {Terminal $B$};
%\node[block,right= .7cm of channel.390] (bob) {\small Decoder};
%\node[block,right=.7cm of channel.330] (eve) {\small Eavesdropper};
\node[above=1cm of x] (k) {$K=\Phi(X^n)$};
\node[above=1cm of y] (l) {$L=\Psi(Y^n,Z^n)$};

\draw[->,thick] (source) -- node[above] {$X^n$} (x);
\draw[->, thick] (source) -- node[above] {$Y^n$} (y);
\draw [->, thick] (x) |- node[below right] {$T^n=\Lambda(X^n)$} (channel);
\draw[<-, thick] (y) |- node[below left] {$Z^n$} (channel);
\draw[->] (x) -- (k);
\draw[->] (y) -- (l);

\end{tikzpicture}
\caption{Two-source model for CR generation with unidirectional communication over an arbitrary point-to-point channel $W.$}
\label{CRprotocol}
\end{figure}
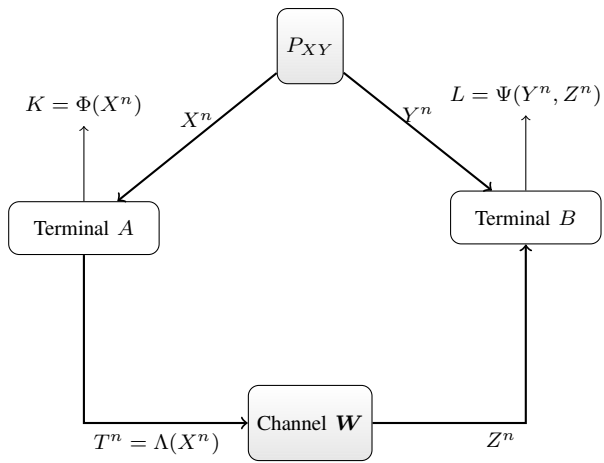
\subsection{Achievable Rate and Capacity}
We define first an achievable UCR rate and the UCR capacity.
\begin{definition} 
\label{ucrrate}
 A number $H$ is called an achievable UCR rate  if there exists a non-negative constant $c$ such that for every $\alpha>0,$ $\beta>0,$ $\delta>0$  and for sufficiently large $n$ there exists a permissible  pair of random variables $(K,L)$ such that
\begin{equation}
   \mbb P\left[K\neq L\right]\leq \alpha, 
    \label{errorcorrelated}
\end{equation}
\begin{equation}
    |\mathcal{K}|\leq 2^{cn},
    \label{cardinalitycorrelated}
\end{equation}
\begin{equation}
\bigg| \frac{1}{n}H(K)-\frac{1}{n}\log \lvert \mc K \rvert \bigg|  \leq \beta,
  \label{uniformity}
\end{equation}
\begin{equation}
    \frac{1}{n}H(K)> H-\delta.
     \label{ratecorrelated}
\end{equation}
\end{definition}
\begin{remark}
Together with \eqref{errorcorrelated}, the technical condition  \eqref{cardinalitycorrelated} ensures for every $\epsilon>0$ and sufficiently large block-length $n$ that 
\begin{align}
    \bigg| \frac{H(K)}{n}-\frac{H(L)}{n} \bigg| \leq \epsilon. \nonumber
\end{align}
\end{remark}
\begin{definition} 
The UCR capacity $C_{UCR}(P_{XY},\bs W)$ is the maximum achievable UCR rate.
\end{definition}
Next, we define an achievable transmission rate and the transmission capacity of the channel $\bs W.$ For this purpose, we begin by providing the definition of a transmission code for the channel $\bs W.$
\begin{definition}
\label{defcode}
A transmission-code $\Gamma_n$ of block-length $n$ and size \footnote{\text{This is the same notation used in} \cite{codingtheorems}.} $\lVert \Gamma_n \rVert $  for the channel $\bs W$ is a family of pairs of codewords and decoding regions $\left\{(\bs{t}_\ell,\setd_\ell) \in \mc T^n \times \mc Z^n: \ell=1,\ldots,\lVert \Gamma_n \rVert \right\}$ such that for all $\ell,j \in \{1,\ldots,\lVert \Gamma_n \rVert\}$  
\begin{align}
&\setd_\ell \cap \setd_j = \emptyset,\quad \ell \neq j. \nonumber
\end{align}The maximum error probability is expressed as 
\begin{align}
    e(\Gamma_n)=\underset{\ell \in \{1,\ldots,\lVert \Gamma_n \rVert\}}{\max}W_{n}({\setd_\ell^{c}}|\mbf{t}_\ell). \nonumber
\end{align}
\end{definition}
\begin{definition}
\label{deftransmissionrate}
   A real number $R$ is called an \textit{achievable} transmission rate of the channel $\bs W$ if for every $\theta,\delta>0$ there exists a code sequence $(\Gamma_n)_{n=1}^\infty$, where each code $\Gamma_n$ of block-length $n$ is defined according to Definition \ref{defcode},  such that
    \[
        \frac{\log\lVert \Gamma_n\rVert}{n}\geq R-\delta
    \]
    and
    \begin{align}
        e(\Gamma_n)\leq\theta
        \nonumber
    \end{align}
    for sufficiently large $n.$ 
\end{definition}
\begin{definition}
The transmission capacity of the channel $\bs W$ is the maximum achievable transmission rate for $W$ and it is denoted by $C(\bs W).$
\end{definition}
\begin{remark}
Throughout the paper, we consider the maximum error probability criterion.
\end{remark}
A general formula for the transmission capacity was established in \cite{verduhan} and it is stated in the following theorem:
\begin{theorem}
\label{generalcapacityformula} \cite{verduhan}
A general formula of the transmission capacity of an arbitrary channel $\bs W$ is equal to 
\begin{align}
C(\bs W)=\underset{\bs{T}}{\sup} \ \underline{I}(\bs{T},\bs{Z}),
\nonumber \end{align}
where $\bs{T}$ is an input process in the form of a sequence of finite-dimensional distributions $\bs{T}=\{T^n=(T_1,\hdots,T_n)\}_{n=1}^{\infty}$
and where $\bs{Z}=\{Z^n=(Z_1,\hdots,Z_n)\}_{n=1}^{\infty}$ is the corresponding output sequence of finite-dimensional distributions induced by $\bs{T}$ via the channel $W.$ Furthermore, $\underline{I}(\bs{T},\bs{Z})$ denotes the inf-information rate between $\bs{T}$ and $\bs{Z},$ which is defined as 
$$ \sup\left\{R: \underset{n\rightarrow\infty}{\lim} \mbb P\left[\frac{1}{n}i(T^n;Z^n)\leq R \right]=0  \right\},  $$
where for any $(t^n,z^n)\in \mc T^n \times \mc Z^n$
\begin{align}
    i(t^n;z^n)=\log\frac{P_{Z^n|T^n}(z^n|t^n)}{P_{Z^n}(z^n)}. \nonumber
\end{align}
\end{theorem}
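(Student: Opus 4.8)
The statement is the Verdú--Han general capacity formula, so the plan is to establish the two matching inequalities $C(\bs W) \geq \sup_{\bs T} \underline{I}(\bs T,\bs Z)$ and $C(\bs W) \leq \sup_{\bs T} \underline{I}(\bs T,\bs Z)$ separately, relying on one-shot coding bounds that require no stationarity, ergodicity, or information-stability structure on the sequence $\{W_n\}$.

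For the achievability direction I would fix an arbitrary input process $\bs T=\{T^n\}$ with induced output $\bs Z=\{Z^n\}$ and show that every $R<\underline{I}(\bs T,\bs Z)$ is an achievable transmission rate in the sense of Definition \ref{deftransmissionrate}. The essential ingredient is the generalized Feinstein lemma: for each block-length $n$ and every threshold $\gamma_n>0$ there exists a code $\Gamma_n$ with $\lVert\Gamma_n\rVert=M_n$ codewords whose maximum error probability satisfies
\[
e(\Gamma_n)\leq \mbb P\!\left[i(T^n;Z^n)\leq \gamma_n\right]+M_n\,2^{-\gamma_n}.
\]
Choosing $M_n=\lceil 2^{nR}\rceil$ and $\gamma_n=nR'$ with $R<R'<\underline{I}(\bs T,\bs Z)$, the first term vanishes by the very definition of the inf-information rate, while the second equals $2^{-n(R'-R)}\to 0$. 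Hence $e(\Gamma_n)\to 0$, the rate $R$ is achievable, and taking the supremum over all input processes $\bs T$ yields the lower bound.

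For the converse I would take any achievable rate $R$, so that there exists a code sequence $(\Gamma_n)$ with $\tfrac{1}{n}\log\lVert\Gamma_n\rVert\geq R-\delta$ and $e(\Gamma_n)\to 0$. Letting $\bs T^\star$ be the input process that places a uniform distribution on the codewords of $\Gamma_n$ and $\bs Z^\star$ the induced output, the Verdú--Han converse bound asserts that for every $\gamma>0$
\[
e(\Gamma_n)\geq \mbb P\!\left[\tfrac{1}{n}i(T^n;Z^n)\leq \tfrac{1}{n}\log\lVert\Gamma_n\rVert-\gamma\right]-2^{-n\gamma}.
\]
Since $e(\Gamma_n)\to 0$ and $2^{-n\gamma}\to 0$, the probability on the right must vanish as well, which forces $R-\delta-\gamma\leq \underline{I}(\bs T^\star,\bs Z^\star)\leq \sup_{\bs T}\underline{I}(\bs T,\bs Z)$; letting $\delta,\gamma\downarrow 0$ closes the upper bound.

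The main obstacle I anticipate is proving the two one-shot bounds in their fully general form: verifying that Feinstein's maximal-coding (greedy) construction and the matching converse bound hold verbatim for arbitrary, non-product conditional distributions $W_n$, and then checking that the absolute thresholds $\gamma_n$ mesh correctly with the liminf-in-probability definition of $\underline{I}$. A secondary subtlety is the asymmetric role of the supremum: achievability needs only a single favourable $\bs T$, whereas the converse must hold for the particular code-induced input $\bs T^\star$, so one has to confirm that the same quantity $\underline{I}$ governs both directions and that no gap opens between them.
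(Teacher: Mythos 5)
The paper never proves this statement at all: it is Theorem~\ref{generalcapacityformula}, imported verbatim from \cite{verduhan} and used downstream as a black box (the converse only invokes $\underline{I}(\bs{T};\bs{Z})\leq C(\bs W)$, and the direct part only invokes the existence of code sequences at any rate below $C(\bs W)$). So there is no internal proof to compare against; what you have reconstructed is, in outline, the original Verd\'u--Han argument, and it is essentially correct. The generalized Feinstein lemma indeed holds for arbitrary conditional distributions $W_n$ (the greedy maximal-coding construction never uses product structure, and it yields a \emph{maximum}-error guarantee, matching the paper's criterion); your threshold choices $\gamma_n=nR'$, $M_n=\lceil 2^{nR}\rceil$ with $R<R'<\underline{I}(\bs{T},\bs{Z})$ mesh correctly with the definition of $\underline{I}$ because the set $\left\{R:\lim_{n}\mbb P\left[\tfrac{1}{n}i(T^n;Z^n)\leq R\right]=0\right\}$ is downward closed; and on the converse side the code-induced $\bs{T}^\star$ is a legitimate input process precisely because the paper's notion of input process is just a sequence of finite-dimensional distributions, with no Kolmogorov consistency required. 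Two small points you should make explicit to fully close the argument: (i) Definition~\ref{deftransmissionrate} only guarantees, for each $\theta>0$, \emph{some} code sequence with error eventually at most $\theta$; to obtain a single sequence with $e(\Gamma_n)\to 0$, as your converse assumes, you need a diagonalization over $\theta=1/k$ (the induced input process then no longer depends on $\theta$, which is what lets you conclude that the probability tends to $0$ rather than merely having $\limsup$ at most $\theta$); (ii) the Verd\'u--Han converse inequality is naturally proved for the \emph{average} error probability, so you should note that it applies a fortiori under the maximum-error criterion since $e(\Gamma_n)$ dominates the average error.
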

\subsection{Main Result}
\label{mainresult}
In this section, we give a general formula for the UCR capacity for the model presented in Section \ref{systemmodel}.
\begin{theorem}
For the model in Fig \ref{CRprotocol}, the UCR capacity $C_{UCR}(P_{XY},\bs W)$ is equal to
\begin{equation}
    C_{UCR}(P_{XY},\bs W)= \underset{\substack{U \\{\substack{U \circlearrow{X} \circlearrow{Y}\\ I(U;X)-I(U;Y) \leq C(\bs W)}}}}{\max} I(U;X),
    \nonumber
\end{equation}
with $C(\bs W)$ being the transmission capacity of the channel $\bs W$.
\label{main theorem}
\end{theorem}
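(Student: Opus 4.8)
The plan is to prove the two inequalities $C_{UCR}(P_{XY},\bs W)\le \max I(U;X)$ (converse) and $C_{UCR}(P_{XY},\bs W)\ge \max I(U;X)$ (achievability) separately, where throughout the maximum is over all $U$ with $U\circlearrow X\circlearrow Y$ and $I(U;X)-I(U;Y)\le C(\bs W)$. The guiding principle, inherited from Ahlswede--Csisz\'ar, is that the auxiliary variable $U$ plays the role of a quantized version of $X^n$ that Terminal $B$ reconstructs with the help of its side information $Y^n$ together with a bin index transmitted over $\bs W$; the novelty is that the admissible bin-index rate is governed by the general transmission-capacity formula $C(\bs W)=\sup_{\bs T}\underline I(\bs T,\bs Z)$ of Theorem~\ref{generalcapacityformula} rather than by a fixed noiseless rate.

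For the converse, I would start from a permissible pair $(K,L)$ achieving rate $H$. Fano's inequality together with the cardinality bound \eqref{cardinalitycorrelated} gives $H(K\mid L)\le 1+\alpha cn$, so that $n(H-\delta)<H(K)=I(K;X^n)\le I(K;Y^nZ^n)+n\epsilon_n$ with $\epsilon_n\to 0$, using that $K$ is a function of $X^n$ and $L$ a function of $(Y^n,Z^n)$. I would then split $I(K;Y^nZ^n)=I(K;Y^n)+I(K;Z^n\mid Y^n)$, single-letterize with an auxiliary of the form $U_i=(K,X^{i-1})$ (augmented by the appropriate past/future coordinates so that the Csisz\'ar sum identity makes the $X$- and $Y$-terms consistent), and introduce a uniform time-sharing index to obtain a single letter $U$ with $U\circlearrow X\circlearrow Y$ and $I(U;X)\ge H-o(1)$. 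The remaining task is to show that the channel-dependent term obeys $\frac1n I(K;Z^n\mid Y^n)\le C(\bs W)+o(1)$, which then yields the constraint $I(U;X)-I(U;Y)\le C(\bs W)$ and closes the converse after letting $\alpha,\beta,\delta\to 0$.

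Bounding this channel term is where the main difficulty lies. Data processing and the fact that $Z^n$ depends on $X^n$ only through $T^n=\Lambda(X^n)$ give $\frac1n I(K;Z^n\mid Y^n)\le \frac1n I(T^n;Z^n)$; however, for an arbitrary channel the \emph{average} mutual-information rate $\frac1n I(T^n;Z^n)$ need not converge to $C(\bs W)=\sup_{\bs T}\underline I(\bs T,\bs Z)$ and may strictly exceed it, so no naive single-letter or average argument suffices. To bridge this gap I would invoke the change-of-measure technique of \cite{strongconverse}: introduce a reference output distribution together with the associated information-density typical set, use the definition of the inf-information rate to control the probability of the atypical set, and bound the contribution of the typical set by $C(\bs W)+\gamma$, the atypical contribution being rendered negligible by the cardinality constraint \eqref{cardinalitycorrelated} and the error bound \eqref{errorcorrelated}. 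This replaces the average mutual-information rate by the operational quantity $C(\bs W)$ and delivers $I(U;X)-I(U;Y)\le C(\bs W)+o(1)$.

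For the achievability, I would fix a test channel $P_{U\mid X}$ with $U\circlearrow X\circlearrow Y$ and $I(U;X)-I(U;Y)<C(\bs W)$ and adapt the Ahlswede--Csisz\'ar two-source scheme. Draw a random codebook of roughly $2^{nI(U;X)}$ sequences $U^n$; by the covering lemma Terminal $A$ can, with probability tending to $1$, find a codeword jointly typical with $X^n$, and it sets $K$ equal to the (nearly uniform) index of that codeword, giving $\frac1n H(K)\approx\frac1n\log|\mc K|\approx I(U;X)$ and hence \eqref{cardinalitycorrelated}, \eqref{uniformity} and \eqref{ratecorrelated}. Partitioning the codebook into about $2^{n(I(U;X)-I(U;Y))}$ Wyner--Ziv bins, the bin index has rate strictly below $C(\bs W)$, so by Theorem~\ref{generalcapacityformula} there exists a transmission code $(\Gamma_n)$ for $\bs W$ conveying it with vanishing error; Terminal $A$ sends the bin index through this code via $T^n=\Lambda(X^n)$. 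Terminal $B$ first decodes the bin index and then, using $Y^n$ as side information, identifies the unique codeword in the bin jointly typical with $Y^n$ (the bin size $\approx 2^{nI(U;Y)}$ makes this step succeed with high probability by the Markov lemma and a packing argument), setting $L$ to the recovered index so that \eqref{errorcorrelated} holds. Taking the supremum over admissible $U$ then gives the matching lower bound. The principal obstacle in both directions is the same: because $\bs W$ is an arbitrary channel with no stationarity, ergodicity or information-stability assumption, the only faithful characterization of its usefulness is the inf-information-rate formula, and it is precisely the change-of-measure step (converse) and the direct appeal to Verd\'u--Han (achievability) that let $C(\bs W)$ enter the single-letter constraint.
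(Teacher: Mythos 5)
Your overall architecture coincides with the paper's: the direct part (Ahlswede--Csisz\'ar double-indexed codebook, with the bin index carried over $\bs W$ by a transmission code guaranteed by Theorem~\ref{generalcapacityformula}) is essentially the paper's proof, and your converse skeleton --- $H(K)=I(K;X^n)$, Csisz\'ar-sum single-letterization with $U=(K,X_1,\dots,X_{J-1},Y_{J+1},\dots,Y_n,J)$, and reduction to bounding the channel-dependent term by $C(\bs W)+o(1)$ via a change of measure in the spirit of \cite{strongconverse} --- is also the paper's. You also correctly diagnose the crux: for a general $\bs W$, the average rate $\frac{1}{n}I(T^n;Z^n)$ may strictly exceed $\sup_{\bs T}\underline{I}(\bs T;\bs Z)$, so no averaging argument can work.

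However, the way you propose to execute the change-of-measure step --- the step carrying all the difficulty --- would fail. You split into the information-density typical set and its complement, ``use the definition of the inf-information rate to control the probability of the atypical set,'' and claim the atypical contribution is rendered negligible by \eqref{cardinalitycorrelated} and \eqref{errorcorrelated}. But the definition of $\underline{I}(\bs T;\bs Z)$ gives no upper bound whatsoever on $\mbb P\left[\frac{1}{n}i(T^n;Z^n)>\underline{I}(\bs T;\bs Z)+\epsilon\right]$; it only guarantees that the probability of the \emph{typical} set $\left\{\frac{1}{n}i(T^n;Z^n)\le \underline{I}(\bs T;\bs Z)+\epsilon\right\}$ does not vanish, i.e., stays above a positive constant along a subsequence (this is exactly how the paper chooses $\epsilon$ in \eqref{choiceepsilon}). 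For a non-information-stable channel --- say a mixture of a noiseless and a useless channel --- the mass of the information spectrum above $\underline{I}+\epsilon$ is a constant, so your ``atypical contribution'' to $\frac{1}{n}I(K;Z^n\mid Y^n)$ is $\Theta(1)$ per letter, not $o(1)$: the very failure mode you identified for the average mutual information reappears inside your fix. The working argument (the paper's, following \cite{strongconverse}) is structurally different: one conditions the whole triple $(X^n,Y^n,Z^n)$ on a set $\mc S$ given by the intersection of the information-density typical set, the agreement event $\{K=L\}$, and a concentration set for $\frac{1}{n}\log\bigl(1/P_{K|Y^n}\bigr)$, whose probability is merely bounded \emph{below} by a constant $\kappa(\alpha,\beta)>0$, possibly far from $1$. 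The price of the measure change is then only $\gamma(\alpha,\beta)+\frac{1}{n}\log\bigl(1/\kappa(\alpha,\beta)\bigr)$, and transferring the entropy bound back to the original measure, $H(K|Y^n)\le H(\tilde{K}|\tilde{Y}^n)+n\gamma(\alpha,\beta)+\log\bigl(1/\kappa(\alpha,\beta)\bigr)$, requires a Chebyshev-type concentration of the self-information whose variance bound rests on the near-uniformity condition \eqref{uniformity} together with \eqref{cardinalitycorrelated}. Your converse never invokes \eqref{uniformity} at all, which is the telltale sign of the gap: without it the entropy-transfer step is unavailable (and it is precisely why the theorem is about \emph{uniform} CR capacity). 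So the missing idea is to condition on a constant-probability (not high-probability) good set and pay $\log(1/\kappa)$, using uniformity-driven concentration to pull the bound back --- not a negligible-atypical-set argument.
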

The converse proof of Theorem \ref{main theorem} is provided in Section \ref{converseproof} and the direct proof is provided in Section \ref{directpart}.
\section{Converse Proof of Theorem \ref{main theorem}}
\label{converseproof}
Let $(K,L)$ be a permissible pair according to a fixed CR-generation protocol of block-length $n,$ as introduced in Section \ref{systemmodel}. We recall that the latter consists of:
\begin{enumerate}
    \item A function $\Phi$ that maps $X^n$ into a random variable $K$ with alphabet $\mathcal{K}$ satisfying $\lvert \mc K \rvert \geq 3$ generated by Terminal $A.$
    \item A function $\Lambda$ that maps $X^n$ into the channel input sequence $T^n=(T_1,\hdots,T_n)\in \mc T^n.$
    \item A function $\Psi$ that maps $Y^n$ and the channel output sequence $Z^n=(Z_1,\hdots, Z_n) \in \mc Z^n $ into a random variable $L$ with alphabet $\mathcal{K}$ generated by Terminal $B.$
\end{enumerate}
We further assume that $(K,L)$ satisfies $\eqref{errorcorrelated}$  $\eqref{cardinalitycorrelated},$ \eqref{uniformity} and $\eqref{ratecorrelated}.$
We are going to show that there exist arbitrarily large $n$ such that
\begin{align}
    \frac{H(K)}{n} \leq \underset{ \substack{U \\{\substack{U \circlearrow{X} \circlearrow{Y}\\ I(U;X)-I(U;Y) \leq C(\bs W)+\alpha'(n)}}}}{\max} I(U;X), \nonumber 
\end{align}
for some $\alpha'(n)>0$, where $\underset{n\rightarrow \infty}{\lim}\alpha'(n)$ can be made arbitrarily small.
Define $\mu(\beta)=\beta+ 2\beta c+\beta^2,$ where $c>0$ is the constant in \eqref{cardinalitycorrelated} and where $\beta>0$ is the constant in \eqref{uniformity}. 
Let $ \gamma(\alpha,\beta)=2\sqrt{\frac{\sqrt{\mu(\beta)}}{1-\sqrt{\alpha}}},$ with $\alpha>0$ being the constant in \eqref{errorcorrelated}.
Let
\begin{align}
    \kappa(\alpha,\beta)&=\alpha+1-\left(1-4\frac{\mu(\beta)}{\gamma(\alpha,\beta)^2}\right)^2 \nonumber \\
    &=\alpha+1-\left(1-\sqrt{\mu(\beta)}(1-\sqrt{\alpha})\right)^{2}. 
    \nonumber
\end{align}
Assume without loss of generality that $\alpha,\beta>0$ satisfy
\begin{align}
    0<\alpha<1, \label{req1}
\end{align}
\begin{align}
    0<\kappa(\alpha,\beta)<\frac{1}{2}, \label{req2}
\end{align}
and 
\begin{align}
    0<\mu(\beta)<1.\label{req3}
\end{align}
Clearly, \eqref{req1}, \eqref{req2} and \eqref{req3} are satisfied for arbitrarily small positive $\alpha$ and $\beta.$
\begin{lemma}
\label{intervallemma}
For $0<\mu(\beta)<1$ and $0<\alpha<1$
\begin{align}
0<\frac{4\mu(\beta)}{\gamma(\alpha,\beta)^2}<1-\sqrt{\alpha}<1.\nonumber
\end{align}
\end{lemma}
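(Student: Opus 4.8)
The plan is to simplify the central quantity $\frac{4\mu(\beta)}{\gamma(\alpha,\beta)^2}$ using the definition of $\gamma(\alpha,\beta)$, after which the entire chain of inequalities collapses to three elementary observations. First I would square the definition $\gamma(\alpha,\beta)=2\sqrt{\frac{\sqrt{\mu(\beta)}}{1-\sqrt{\alpha}}}$ to obtain $\gamma(\alpha,\beta)^2=4\sqrt{\mu(\beta)}/(1-\sqrt{\alpha})$, and then substitute this into the ratio. Cancelling the factor of $4$ and one power of $\sqrt{\mu(\beta)}$ yields
\[
\frac{4\mu(\beta)}{\gamma(\alpha,\beta)^2}=\sqrt{\mu(\beta)}\,(1-\sqrt{\alpha}).
\]

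Having rewritten the central term, I would then verify the three strict inequalities in turn, so that the claim reduces to showing $0<\sqrt{\mu(\beta)}(1-\sqrt{\alpha})<1-\sqrt{\alpha}<1$. The leftmost inequality follows from positivity of both factors: the hypothesis $\mu(\beta)>0$ gives $\sqrt{\mu(\beta)}>0$, while $0<\alpha<1$ forces $0<\sqrt{\alpha}<1$ and hence $1-\sqrt{\alpha}>0$. For the middle inequality I would divide through by the positive quantity $1-\sqrt{\alpha}$, reducing it to $\sqrt{\mu(\beta)}<1$, which is immediate from the assumption $\mu(\beta)<1$. The rightmost inequality $1-\sqrt{\alpha}<1$ is simply the statement $\sqrt{\alpha}>0$, again a consequence of $\alpha>0$.

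There is no genuine obstacle here, as the lemma is purely algebraic and invokes no earlier result; note in particular that the definition of $\mu(\beta)$ in terms of $\beta$ and $c$ is never unpacked, since $\mu(\beta)$ enters only through the bound $0<\mu(\beta)<1$. The one point requiring care is to record that $1-\sqrt{\alpha}>0$ before dividing by it in the middle step, since it is this positivity, rather than mere nonvanishing, that preserves the direction of the inequality.
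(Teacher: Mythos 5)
Your proof is correct and follows essentially the same route as the paper: both square the definition of $\gamma(\alpha,\beta)$ to rewrite $\frac{4\mu(\beta)}{\gamma(\alpha,\beta)^2}$ as $\sqrt{\mu(\beta)}\,(1-\sqrt{\alpha})$ and then read off the chain of inequalities from $0<\mu(\beta)<1$ and $0<\alpha<1$. Your write-up merely makes explicit the elementary verifications that the paper compresses into a single displayed line.
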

\begin{proof}
We have
\begin{align}
    \gamma(\alpha,\beta)^2=4\frac{\sqrt{\mu(\beta)}}{1-\sqrt{\alpha}} \nonumber
\end{align}
yielding
\begin{align}
    0<4\frac{\mu(\beta)}{\gamma(\alpha,\beta)^2}= \sqrt{\mu(\beta)}(1-\sqrt{\alpha}) <1-\sqrt{\alpha}<1. \nonumber
\end{align}
This proves Lemma \ref{intervallemma}.
\end{proof}
\begin{comment}
\begin{lemma}
\begin{align}
 \alpha<\kappa(\alpha,\beta)<1.    \nonumber
 \end{align}
\end{lemma}
\begin{proof}
From Lemma \ref{intervallemma}, we know that
\begin{align}
    0<\frac{4\mu(\beta)}{\gamma(\alpha,\beta)^2}<1-\sqrt{\alpha}.
\end{align}
Therefore
\begin{align}
  \sqrt{\alpha}<\left(1-  \frac{4\mu(\beta)}{\gamma(\alpha,\beta)^2}\right)<1
\end{align}

which implies that
\begin{align}
      \alpha<\left(1-  \frac{4\mu(\beta)}{\gamma(\alpha,\beta)^2}\right)^2<1
\end{align}
This yields
\begin{align}
    \alpha<\alpha+1-\left(1-4\frac{\mu(\beta)}{\gamma(\alpha,\beta)^2}\right)^2<1.
\end{align}
As a result
\begin{align}
    \alpha<\kappa(\alpha,\beta)<1.
\end{align}
\end{proof}
\end{comment}

Let $\epsilon$ any positive constant satisfying for infinitely many $n$
\begin{align}
    2\kappa(\alpha,\beta)<\mbb P\left[\frac{1}{n}i(T^n;Z^n) \leq \underline{I}(T;Z)+\epsilon \right]. \label{choiceepsilon} \end{align}
Since
\begin{align}
   \underline{I}(T;Z) = \sup\left\{R: \underset{n\rightarrow\infty}{\lim} \mbb P\left[\frac{1}{n}i(T^n;Z^n)\leq R \right]=0  \right\},\nonumber
\nonumber \end{align}
  and since $0<\kappa(\alpha,\beta)<\frac{1}{2},$  we know that such an $\epsilon>0$ exists.
  
 First, we will introduce and prove the following claim.
 \begin{claim}
 For sufficiently large $n$ satisfying \eqref{choiceepsilon}, it holds that
 \label{claim1}
\begin{align}
    \frac{H(K|Y^n)}{n}\leq C(\bs W)+\alpha'(n), \nonumber
 \end{align}
where $\underset{n\rightarrow \infty}{\lim} \alpha'(n)$ can be made arbitrarily small for $\alpha>0,$ $\beta>0$ and $\epsilon>0$ chosen arbitrarily small.
 \end{claim}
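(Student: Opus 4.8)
The plan is to reduce the claim to an information-spectrum bound on $\tfrac1n I(K;Z^n|Y^n)$ and then to control the remaining terms through the change-of-measure method of \cite{strongconverse}, the constants $\mu(\beta),\gamma(\alpha,\beta),\kappa(\alpha,\beta)$ and the choice of $\epsilon$ in \eqref{choiceepsilon} being exactly the bookkeeping this method needs. First I would write $H(K|Y^n)=I(K;Z^n|Y^n)+H(K|Y^n,Z^n)$. Since $L=\Psi(Y^n,Z^n)$ is a function of $(Y^n,Z^n)$ and $\mbb P[K\neq L]\le\alpha$ by \eqref{errorcorrelated}, Fano's inequality together with \eqref{cardinalitycorrelated} gives $H(K|Y^n,Z^n)\le H(K|L)\le 1+\alpha\log(|\mc K|-1)\le 1+\alpha c n$, so that $\tfrac1n H(K|Y^n,Z^n)$ tends to $\alpha c$, a contribution that is harmless and gets absorbed into $\alpha'(n)$. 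What remains is to bound $\tfrac1n I(K;Z^n|Y^n)$.

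The essential difficulty is that the naive bound does not suffice. The channel induces the Markov chain $(K,Y^n)-T^n-Z^n$, whence $I(K;Z^n|Y^n)\le I(T^n;Z^n|Y^n)=I(T^n;Z^n)-I(Y^n;Z^n)\le I(T^n;Z^n)$; but for an information-unstable channel $\tfrac1n I(T^n;Z^n)$ can exceed $C(\bs W)=\sup_{\bs T}\underline I(\bs T,\bs Z)$, so it is useless as a bound. Instead I would argue at the level of the information spectrum and aim at $\tfrac1n I(K;Z^n|Y^n)\le\underline I(T;Z)+\epsilon+o(1)$; since $\bs T=\{\Lambda(X^n)\}$ is itself an admissible input process, Theorem \ref{generalcapacityformula} then yields $\underline I(T;Z)\le C(\bs W)$ and closes the estimate.

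To reach that spectrum bound I would first convert the near-uniformity \eqref{uniformity} into concentration of the self-information $-\log P_K(K)$ about $H(K)\approx\log|\mc K|$: a second-moment estimate, in which \eqref{cardinalitycorrelated} and \eqref{uniformity} produce the quantity $\mu(\beta)$, bounds the spread, and Chebyshev's inequality confines $-\log P_K(K)$ to a window of half-width governed by $\gamma(\alpha,\beta)$ outside of which the probability is at most $4\mu(\beta)/\gamma(\alpha,\beta)^2$, the very quantity controlled in Lemma \ref{intervallemma}. Intersecting this near-uniform set with the agreement event $\{K=L\}$ and with the spectrum event $\mc G=\{\tfrac1n i(T^n;Z^n)\le\underline I(T;Z)+\epsilon\}$, a union bound together with \eqref{choiceepsilon} and the definition of $\kappa(\alpha,\beta)$ forces this intersection to have probability at least $2\kappa(\alpha,\beta)-\alpha-4\mu(\beta)/\gamma(\alpha,\beta)^2>0$, uniformly in the infinitely many $n$ satisfying \eqref{choiceepsilon}.

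Finally I would let $\tilde P$ be $P$ conditioned on this intersection. Conditioning on an event of constant probability perturbs the relevant divergences by at most $\log\bigl(1/P(\text{event})\bigr)$, a per-letter penalty of order $1/n$ that vanishes; under $\tilde P$ the density is capped at $n(\underline I(T;Z)+\epsilon)$, which bounds $I_{\tilde P}(K;Z^n|Y^n)$, the agreement event makes $H_{\tilde P}(K|Y^n,Z^n)$ negligible, and the near-uniform window keeps $H_{\tilde P}(K)$ near $\log|\mc K|$. Transferring back to $P$ then gives $\tfrac1n H(K|Y^n)\le\underline I(T;Z)+\epsilon+o(1)\le C(\bs W)+\alpha'(n)$ with $\alpha'(n)\to\epsilon+\alpha c$, which is arbitrarily small. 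I expect the change of measure itself to be the main obstacle: the point of all the bookkeeping with $\mu,\gamma$ and $\kappa$ is to keep every correction term $o(n)$ while simultaneously handling the decoder side information $Y^n$ — that is, passing between the conditional information $I(K;Z^n|Y^n)$ and the unconditional density $i(T^n;Z^n)$ that appears in \eqref{choiceepsilon} — and the residual non-uniformity of $K$.
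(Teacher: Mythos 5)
Your overall architecture is genuinely the paper's: the paper likewise conditions the joint law on the intersection of a spectrum event, the agreement event $\{K=L\}$, and a concentration event, uses \eqref{choiceepsilon} plus a union bound to give this intersection probability at least $\kappa(\alpha,\beta)>0$ (your positivity bookkeeping $2\kappa(\alpha,\beta)-\alpha-4\mu(\beta)/\gamma(\alpha,\beta)^2>0$ is consistent with the paper's Lemma \ref{probS}), bounds the conditional entropy under the tilted measure via the Data Processing Inequality and the capped information density (the agreement event making the Fano-type term exactly zero there), and finally invokes Theorem \ref{generalcapacityformula} to pass from $\underline{I}(\bs{T};\bs{Z})$ to $C(\bs W)$. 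Your diagnosis that the naive bound $I(K;Z^n|Y^n)\le I(T^n;Z^n)$ is useless for information-unstable channels is also exactly the right motivation for the change of measure.

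The genuine gap is the ``transferring back to $P$'' step, and it is twofold. First, the principle you invoke --- that conditioning on an event of probability at least $\kappa$ perturbs the relevant entropies by at most $\log(1/\kappa)$ --- is false in the direction you need. The pointwise bound $P_{\tilde K,\tilde Y^n}(k,y^n)\le P_{K,Y^n}(k,y^n)/\kappa$ controls self-informations from one side only; it does not prevent $H_{\tilde P}(K|Y^n)$ from being smaller than $H_{P}(K|Y^n)$ by $\Theta(n)$, since the conditioning event may concentrate precisely on pairs $(k,y^n)$ where $P_{K|Y^n}(k|y^n)$ is large. To conclude $H_P(K|Y^n)\le H_{\tilde P}(\tilde K|\tilde Y^n)+o(n)$ one needs $-\log P_{K|Y^n}(k|y^n)\ge H(K|Y^n)-n\gamma(\alpha,\beta)$ to hold \emph{pointwise on the support of the tilted law}, i.e.\ the conditioning event must contain a concentration event for the \emph{conditional} self-information. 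Second, and relatedly, the concentration you actually establish (Chebyshev for $-\log P_K(K)$ around $H(K)$, the paper's set $\mc L$ and Lemma \ref{boundprobsett}) is the wrong one for this purpose; what must enter the intersection is the paper's set $\mc D$ of pairs $(k,y^n)$ with $\frac{1}{n}\log\frac{1}{P_{K|Y^n}(k|y^n)}\ge\frac{1}{n}H(K|Y^n)-\gamma(\alpha,\beta)$. Passing from $\mc L$ to $\mc D$ requires an extra argument (the paper's Lemma \ref{boundprobsetd}): writing $P_K(k)=\mbb E\left[P_{K|Y^n}(k|Y^n)\right]$ and applying Markov's inequality for each $k\in\mc L$, together with $H(K|Y^n)\le H(K)$, gives $\mbb P\left[(K,Y^n)\in\mc D\right]\ge\left(1-4\mu(\beta)/\gamma(\alpha,\beta)^2\right)^2$. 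With $\mc D$ in the conditioning event, the transfer becomes the paper's Lemma \ref{upperboundlogcardinality}: on the support of $P_{\tilde K,\tilde Y^n}$ one has $P_{\tilde K,\tilde Y^n}(k,y^n)/P_{Y^n}(y^n)\le 2^{n\gamma(\alpha,\beta)-H(K|Y^n)}/\kappa(\alpha,\beta)$, and comparing the minimum of the corresponding log-ratio with its expectation, which equals $H(\tilde K|\tilde Y^n)-D(P_{\tilde Y^n}||P_{Y^n})\le H(\tilde K|\tilde Y^n)$, yields $H(K|Y^n)\le n\gamma(\alpha,\beta)+\log\frac{1}{\kappa(\alpha,\beta)}+H(\tilde K|\tilde Y^n)$. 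Without this conditional-concentration step your argument does not close; everything after it in your sketch is sound.
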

\begin{claimproof}
In order to prove the claim, we will use a change of measure argument. To prepare this, we need some technicalities. 
\begin{lemma}
\label{upperboundvariance}
For $\lvert \mc K \rvert \geq 3,$ it holds for sufficiently large $n$ that
 \begin{align}
\mathrm{var}\left[\frac{1}{n}\log\frac{1}{P_{K}(K)}\right]\leq \mu(\beta).
\nonumber \end{align}
\end{lemma}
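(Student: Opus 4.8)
The plan is to control the variance of $W=\frac{1}{n}\log\frac{1}{P_K(K)}$, whose mean is $\mbb E[W]=\frac{1}{n}H(K)$. Write $m=\frac{1}{n}\log\lvert\mc K\rvert$. Three elementary facts organize the estimate: the cardinality bound \eqref{cardinalitycorrelated} gives $m\le c$; the inequality $H(K)\le\log\lvert\mc K\rvert$ gives $\mbb E[W]\le m$; and the near-uniformity \eqref{uniformity} gives $0\le m-\mbb E[W]\le\beta$. Since the second moment about a fixed point is minimized at the mean,
\begin{align}
\mathrm{var}[W]=\mbb E\big[(W-m)^2\big]-\big(m-\mbb E[W]\big)^2\le \mbb E\big[(W-m)^2\big], \nonumber
\end{align}
so it suffices to bound $\mbb E\big[(W-m)^2\big]$ from above.

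Setting $a_k=\lvert\mc K\rvert P_K(k)$ so that $W-m=-\frac{1}{n}\log a_K$, I would write
\begin{align}
\mbb E\big[(W-m)^2\big]=\frac{1}{n^2}\sum_{k}P_K(k)\big(\log a_k\big)^2 \nonumber
\end{align}
and split the support into the heavy atoms $\mc K_1=\{k:P_K(k)\ge 1/\lvert\mc K\rvert\}$ and the light atoms $\mc K_2=\{k:P_K(k)< 1/\lvert\mc K\rvert\}$. On $\mc K_1$ one has $0\le\log a_k\le\log\lvert\mc K\rvert\le cn$, so $(\log a_k)^2\le cn\,\log a_k$ and the heavy contribution is at most $\frac{c}{n}\sum_{k\in\mc K_1}P_K(k)\log a_k$. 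The full sum $\sum_{k}P_K(k)\log a_k=\log\lvert\mc K\rvert-H(K)$ is at most $\beta n$ by \eqref{uniformity}, and restricting it to $\mc K_1$ changes it by the (negative) light terms, whose total magnitude is an absolute constant since $t\mapsto t\log(1/t)$ is bounded on $(0,1]$; hence the heavy contribution is $c\beta+O(1/n)$.

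The delicate part is $\mc K_2$, where the self-information $\log\frac{1}{P_K(k)}$ is large while $P_K(k)$ is tiny, so the pointwise bound used on $\mc K_1$ fails. Here I would rewrite each summand as $P_K(k)\big(\log a_k\big)^2=\frac{1}{\lvert\mc K\rvert}\,a_k\big(\log a_k\big)^2$ with $a_k\in(0,1)$ and invoke the fact that $t\mapsto t(\log t)^2$ is bounded on $(0,1]$ by an absolute constant; since $\lvert\mc K_2\rvert\le\lvert\mc K\rvert$, the entire light sum is at most an absolute constant, and after the prefactor $1/n^2$ its contribution is $O(1/n^2)$. Combining the two pieces gives $\mathrm{var}[W]\le c\beta+o(1)$ as $n\to\infty$. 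Since $\mu(\beta)=\beta+2\beta c+\beta^2$ exceeds the leading term $c\beta$ by the fixed positive margin $\beta+\beta c+\beta^2$, the vanishing $o(1)$ terms are absorbed for all sufficiently large $n$, yielding $\mathrm{var}[W]\le\mu(\beta)$. I expect the main obstacle to be exactly this control of the light atoms $\mc K_2$: the resolution is that $t(\log t)^2$ is uniformly bounded on $(0,1]$ while the cardinality constraint \eqref{cardinalitycorrelated} caps the number of such atoms at $\lvert\mc K\rvert$, so their total weight in the second moment is a constant that is killed by the $1/n^2$ normalization.
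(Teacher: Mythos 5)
Your proof is correct, but it takes a genuinely different route from the paper's. The paper bounds the uncentered second moment $\mbb E\left[\log^2 P_K(K)\right]$ directly: it splits $\mc K$ at the absolute threshold $1/e$, observes that at most two atoms can have $P_K(k)>1/e$ (each contributing at most $1$ in nats), and handles the low-probability atoms with Jensen's inequality applied to the concave map $y\mapsto \ln^2 y$ on $y\geq e$, yielding $P_L\ln^2\left(\lvert \mc K_L\rvert/P_L\right)$ and then, after maximizing $y\ln^2(1/y)$ and $y\ln(1/y)$ on $(0,1]$, a closed-form bound $1+\ln^2\lvert\mc K\rvert+4/e^2+(2/e)\ln\lvert\mc K\rvert$; only at the end does it subtract the squared mean $(H(K)/n)^2$ and invoke \eqref{uniformity} and \eqref{cardinalitycorrelated}, which produces $\mu(\beta)$ through the squared cross terms $\beta+2\beta c+\beta^2$. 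You instead center at $m=\frac{1}{n}\log\lvert\mc K\rvert$ from the outset (using that the variance is the minimal second moment about any point), split at the distribution-dependent threshold $1/\lvert\mc K\rvert$, linearize $(\log a_k)^2\leq cn\log a_k$ on the heavy atoms so that the estimate reduces to the first-moment entropy gap $\log\lvert\mc K\rvert-H(K)\leq \beta n$, and dispose of the light atoms via the uniform bound on $t(\log t)^2$ on $(0,1]$ combined with the $1/n^2$ prefactor. This is sound: the heavy contribution is at most $c\beta+O(1/n)$ (the restriction to heavy atoms costs only an absolute constant, by boundedness of $t\log(1/t)$), the light contribution is $O(1/n^2)$, and the fixed margin $\mu(\beta)-c\beta=\beta(1+c+\beta)>0$ absorbs the vanishing terms for large $n$. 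Your argument makes the role of \eqref{uniformity} more transparent (it enters linearly as an entropy gap rather than through squared cross terms), yields the asymptotically sharper conclusion $\mathrm{var}\leq c\beta+o(1)$, and never uses $\lvert\mc K\rvert\geq 3$; the paper's Jensen step, in exchange, gives a clean standalone bound on the raw second moment that does not reference $\log\lvert\mc K\rvert$ until the final combination.
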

\begin{proof}
We have
\begin{align}
    \mbb E \left[\log^{2} P_{K}(K)\right]
    =\frac{1}{\ln(2)^2}\mbb E \left[\ln^{2} P_{K}(K)\right].
\nonumber \end{align}

Define the following two sets
\begin{align}
    \mc K_{L}=\{k\in \mc K: P_{K}(k)\leq \frac{1}{e}  \}
\nonumber \end{align}
and
\begin{align}
    \mc K_{H}=\{k\in \mc K: P_{K}(k)> \frac{1}{e}  \}.
\nonumber \end{align}
Clearly, it holds that $\lvert \mc K_{L} \rvert +\lvert \mc K_{H} \rvert =\lvert \mc K\rvert.$ 
Let 
\begin{align}
    P_{L}=\sum_{k\in \mc K_{L}} P_{K}(k)
\nonumber \end{align}
and
\begin{align}
    P_{H}=\sum_{k\in \mc K_{H}} P_{K}(k).
\nonumber \end{align}

Notice first that
\begin{align}
    1\geq P_{H}> \lvert \mc K_{H} \rvert \frac{1}{e}
\nonumber \end{align}
yielding
\begin{align}
 \lvert \mc K_{H} \rvert<e.
 \nonumber
\end{align}
Therefore,
\begin{align}
    \lvert \mc K_{H} \rvert\leq 2. \nonumber
\end{align}
Since $\lvert \mc K\rvert \geq 3,$ it follows that
\begin{align}
\lvert \mc K _L \rvert=\lvert \mc K \rvert - \lvert \mc K_H\rvert \geq 1. \nonumber
\end{align}

Now, it holds that
\begin{align}
    &\mbb E \left[\ln^{2} P_{K}(K)\right]\nonumber \\
    &=\sum_{k\in \mc K_{L}} P_{K}(k)\ln^2\frac{1}{P_{K}(k)}+ \sum_{k\in \mc K_{H}} P_{K}(k)\ln^2\frac{1}{P_{K}(k)}.
\label{termsinthesum} \end{align}
We we will find appropriate upper-bound for each term in the right-hand side of \eqref{termsinthesum}. On the one hand, we have
\begin{align}
    &\sum_{k\in \mc K_{L}} P_{K}(k)\ln^2\left(\frac{1}{P_{K}(k)}\right) \nonumber \\
    &=P_{L}\sum_{k\in \mc K_{L}} \frac{P_{K}(k)}{P_{L}}\ln^2\left(\frac{1}{P_{K}(k)}\right) \nonumber \\
    &\overset{(a)}{\leq} P_{L}\ln^2\left( \sum_{k\in \mc K_{L}} \frac{P_{K}(k)}{P_{L}}\frac{1}{P_{K}(k)}\right) \nonumber \\
    &=P_{L}\ln^{2}\frac{\lvert \mc K_{L}\rvert}{P_{L}},\nonumber \\
\nonumber \end{align}
where $(a)$ follows because $\ln^{2}(y)$ is concave in the range $y\geq e$ and because for any $k \in \mc K_L,$  $\frac{1}{P_{K}(k)}\geq e.$

On the other hand, we have
\begin{align}
    &\sum_{k\in \mc K_{H}} P_{K}(k)\ln^2\frac{1}{P_{K}(k)} \nonumber\\
    &\overset{(a)}{\leq} \sum_{k\in \mc K_{H}}  P_{K}(k) \ln^2(e) \nonumber \\
    &\leq 1,
\nonumber \end{align}
where $(a)$ follows because $\ln^2(1/y)$ is non-increasing in the range $0<y\leq 1$ and because $\frac{1}{e}<P_{K}(k)\leq 1$ for $k\in \mc K_{H}.$

This implies using the fact that $\lvert \mc K \rvert \geq \lvert \mc K_{L}\rvert\geq 1$
that
\begin{align}
    &\mbb E \left[\ln^{2} P_{K}(K)\right] \nonumber \\
    &\leq 1+ P_{L}\ln^{2}\frac{\lvert \mc K_L\rvert}{P_{L}} \nonumber \\
    &=1+P_{L}\left(\ln\left(\lvert\mc K_L\rvert \right)+\ln\frac{1}{P_{L}}  \right)^{2} \nonumber \\
    &\leq 1+P_{L}\left(\ln\left(\lvert\mc K\rvert\right)+\ln\frac{1}{P_{L}}  \right)^{2} \nonumber \\
    &= 1+P_{L}\left(\ln\left(\lvert\mc K\rvert\right)^2+\ln^{2}\frac{1}{P_{L}}+2\ln\left(\frac{1}{P_{L}}\right) \ln\lvert\mc K\rvert \right) \nonumber \\
    &\overset{(a)}{\leq} 1+\ln\left(\lvert\mc K\rvert\right)^2+\frac{4}{e^2}+2\frac{1}{e}\ln\lvert\mc K\rvert,
\nonumber \end{align}
where $(a)$ follows because $y\ln^2(1/y)$ and $y\ln(1/y)$ are maximized by $\frac{4}{e^2}$ and $\frac{1}{e}$ in the range $0<y\leq 1,$ respectively.

Thus, it follows that 
\begin{align}
    &\mbb E \left[\frac{1}{n^2}\log^{2} P_{K}(K)\right] \nonumber \\
    &\leq\frac{1}{n^2\ln(2)^2}\left(1+\ln\left(\lvert\mc K\rvert\right)^2+\frac{4}{e^2}+2\frac{1}{e}\ln\lvert\mc K\rvert\right) \nonumber \\
    &=\frac{1+\frac{4}{e^2}}{n^2\ln(2)^2}+\frac{\log^{2}\left(\lvert\mc K\rvert\right)}{n^2}+\frac{2\log\lvert \mc K\rvert}{n^2\ln(2)e} \nonumber \\
    &\overset{(a)}{\leq} \frac{1+\frac{4}{e^2}}{n^2\ln(2)^2}+\frac{\log^{2}\left(\lvert\mc K\rvert\right)}{n^2}+\frac{2c}{n\ln(2)e}, \nonumber 
\nonumber \end{align}
where $(a)$ follows because $\frac{\log \lvert \mc K \rvert}{n} \leq c$ 
(from \eqref{cardinalitycorrelated}). 

Since $\underset{n\rightarrow\infty}{\lim}\frac{1+\frac{4}{e^2}}{n^2\ln(2)^2}+\frac{2c}{n\ln(2)e}=0,$ it follows that for sufficiently large $n$
\begin{align}
      &\mbb E \left[\frac{1}{n^2}\log^{2} P_{K}(K)\right] \leq \beta+\frac{\log^{2}\left(\lvert\mc K\rvert\right)}{n^2}. \nonumber
\end{align}

From \eqref{uniformity}, we know that
\begin{align}
  \frac{\log\lvert \mc K \rvert}{n}\leq \frac{H(K)}{n}+\beta.
\nonumber \end{align}
It follows that
\begin{align}
     \mbb E \left[\frac{1}{n^2}\log^{2} P_{K}(K)\right] \leq \beta+\frac{1}{n^2} ( H(K)+n\beta)^2
\nonumber \end{align}
which yields
\begin{align}
    &\mathrm{var}\left[\frac{1}{n}\log\frac{1}{P_{K}(K)}\right] \nonumber \\
    &= \mbb E\left[\frac{1}{n^2}\log^2\left(\frac{1}{P_{K}(K)}\right)\right]-\mbb E\left[ \frac{1}{n} \log\left(\frac{1}{P_{K}(K)}\right) \right]^2 \nonumber \\
    &=\mbb E\left[\frac{1}{n^2}\log^2\left(\frac{1}{P_{K}(K)}\right)\right]-\frac{1}{n^2} H(K)^2 \nonumber \\
    &\leq \beta+ 2\beta \frac{H(K)}{n}+\beta^2 \nonumber \\
    &\overset{(a)}{\leq} \beta+ 2\frac{\beta\log\lvert \mc K \rvert}{n}+\beta^2\nonumber \\
    &\overset{(b)}{\leq} \beta+ 2\beta c+\beta^2 \nonumber \\
    &=\mu(\beta),
\nonumber \end{align}

where $(a)$ follows because $H(K)\leq \log\lvert \mc K \rvert$ and $(b)$ follows from \eqref{cardinalitycorrelated}. 
This proves Lemma \ref{upperboundvariance}.

\end{proof}

Consider now the sets
\begin{align}
    \mc L=\{ k\in \mc K: \frac{1}{n}\log\frac{1}{P_{K}(k)} \geq \frac{1}{n}H(K) -\frac{\gamma(\alpha,\beta)}{2}     \}
\nonumber \end{align}

and
\begin{align}
    \mc D=\{ (k,y^n)\in \mc K\times \mc Y^n:  \frac{1}{n}\log\frac{1}{P_{K|Y^n}(k|y^n)} \geq \frac{1}{n}H(K|Y^n)-\gamma(\alpha,\beta) \}.
\nonumber \end{align}
Let us now introduce and prove the following lemmas:
\begin{lemma}
\label{boundprobsett}
For sufficiently large $n$
\begin{align}
\mbb P\left[K\in \mc L \right]\geq 1-4\frac{\mu(\beta)}{\gamma(\alpha,\beta)^2}. \nonumber
\end{align}
\end{lemma}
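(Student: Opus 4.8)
The plan is to recognize that $\mathcal{L}$ is nothing but the event that the random variable $W \equiv \frac{1}{n}\log\frac{1}{P_K(K)}$ stays within $\frac{\gamma(\alpha,\beta)}{2}$ below its own mean, and then to control the lower tail by Chebyshev's inequality. The first step is the elementary but crucial observation that the expectation of $W$ is exactly the normalized entropy,
\begin{align}
\mathbb{E}\left[\frac{1}{n}\log\frac{1}{P_K(K)}\right]=\frac{1}{n}H(K),
\nonumber
\end{align}
by the very definition of $H(K)$. Hence the complementary set $\mathcal{L}^c$ is precisely the event $\{W<\mathbb{E}[W]-\frac{\gamma(\alpha,\beta)}{2}\}$, i.e.\ a one-sided deviation of $W$ below its mean by at least $\frac{\gamma(\alpha,\beta)}{2}$.

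Second, I would bound this deviation probability using Chebyshev's inequality. Writing $t=\frac{\gamma(\alpha,\beta)}{2}$, we have
\begin{align}
\mathbb{P}\left[K\in\mathcal{L}^c\right]=\mathbb{P}\left[W<\mathbb{E}[W]-t\right]\leq\mathbb{P}\left[\lvert W-\mathbb{E}[W]\rvert\geq t\right]\leq\frac{\mathrm{var}(W)}{t^2}=\frac{4\,\mathrm{var}(W)}{\gamma(\alpha,\beta)^2}.
\nonumber
\end{align}
The final step is to substitute the variance estimate already established in Lemma \ref{upperboundvariance}, namely $\mathrm{var}(W)\leq\mu(\beta)$ for $\lvert\mathcal{K}\rvert\geq 3$ and sufficiently large $n$, which immediately gives $\mathbb{P}[K\in\mathcal{L}^c]\leq\frac{4\mu(\beta)}{\gamma(\alpha,\beta)^2}$, and therefore $\mathbb{P}[K\in\mathcal{L}]\geq 1-\frac{4\mu(\beta)}{\gamma(\alpha,\beta)^2}$, as claimed.

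There is essentially no hard obstacle here: the entire content of the lemma is a concentration statement for the empirical entropy density around its mean, and all the genuine work has been front-loaded into the variance bound of Lemma \ref{upperboundvariance}. The only point requiring minor care is the direction of the inequality defining $\mathcal{L}$, so that one identifies the correct (lower) tail and applies Chebyshev with the right sign; the factor of $4$ in the bound arises naturally from the $t^2=\gamma(\alpha,\beta)^2/4$ in the denominator. By Lemma \ref{intervallemma} this bound is nontrivial, since $\frac{4\mu(\beta)}{\gamma(\alpha,\beta)^2}<1$, so the stated lower bound on $\mathbb{P}[K\in\mathcal{L}]$ is indeed positive.
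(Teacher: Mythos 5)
Your proof is correct and follows essentially the same route as the paper: identify $\mathcal{L}^c$ as the lower-tail event for $\frac{1}{n}\log\frac{1}{P_K(K)}$ about its mean $\frac{1}{n}H(K)$, apply Chebyshev's inequality with $t=\frac{\gamma(\alpha,\beta)}{2}$, and invoke the variance bound of Lemma \ref{upperboundvariance} (which is where the ``sufficiently large $n$'' condition enters). No gaps.
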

\begin{proof}
It holds that
\begin{align}
    &\mbb P\left[ K\notin \mc L\right] \nonumber \\
    &=\mbb P\left[ \frac{1}{n}\log\frac{1}{P_{K}(k)} - \frac{1}{n}H(K) <-\frac{\gamma(\alpha,\beta)}{2}  \right] \nonumber \\
    &\leq \mbb P\left[ \bigg\lvert \frac{1}{n}\log\frac{1}{P_{K}(k)} - \frac{1}{n}H(K) \bigg\rvert > \frac{\gamma(\alpha,\beta)}{2}  \right] \nonumber \\
    &\leq \mbb P\left[ \bigg\lvert \frac{1}{n}\log\frac{1}{P_{K}(k)} - \frac{1}{n}H(K) \bigg\rvert \geq \frac{\gamma(\alpha,\beta)}{2}  \right] \nonumber \\
    &\overset{(a)}{\leq} 4\frac{\mathrm{var}\left[\frac{1}{n}\log\left(\frac{1}{\mbb P_{ K}\left(K\right)}\right)  \right]}{\gamma(\alpha,\beta)^2} \nonumber \\
    &\overset{(b)}{\leq} 4\frac{\mu(\beta)}{\gamma(\alpha,\beta)^2}, \nonumber
\end{align}
where $(a)$ follows from Chebyshev's inequality since $\mbb E\left[\frac{1}{n}\log\frac{1}{P_{K}(k)}\right] =\frac{1}{n}H(K)$ and $(b)$ follows from Lemma \ref{upperboundvariance}.
Therefore
\begin{align}
\mbb P\left[K\in \mc L \right]\geq 1-4\frac{\mu(\beta)}{\gamma(\alpha,\beta)^2}. \nonumber
\end{align}
This proves Lemma \ref{boundprobsett}.
\end{proof}
\begin{lemma}
\label{boundprobsetd}
For sufficiently large $n,$ it holds that
\begin{align}
  \mbb P \left[(K,Y^n)\in \mc D\right]  \geq \left(1-4\frac{\mu(\beta)}{\gamma(\alpha,\beta)^2}\right)^2. \nonumber
\end{align}
\end{lemma}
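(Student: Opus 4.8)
The plan is to reduce Lemma~\ref{boundprobsetd} to the bound $\mbb P[K\in\mc L]\ge 1-4\mu(\beta)/\gamma(\alpha,\beta)^2$ already furnished by Lemma~\ref{boundprobsett}, by splitting the slack $\gamma(\alpha,\beta)$ appearing in the definition of $\mc D$ into two equal halves. Introduce the information density $i(k;y^n)=\log\frac{P_{K|Y^n}(k|y^n)}{P_K(k)}$, whose expectation under $P_{KY^n}$ is $I(K;Y^n)$, and use the two elementary identities $\log\frac{1}{P_{K|Y^n}(k|y^n)}=\log\frac{1}{P_K(k)}-i(k;y^n)$ and $H(K|Y^n)=H(K)-I(K;Y^n)$. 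A direct substitution then shows the set inclusion
\[
\left\{(k,y^n):\ k\in\mc L,\ \tfrac{1}{n}i(k;y^n)\le \tfrac{1}{n}I(K;Y^n)+\tfrac{\gamma(\alpha,\beta)}{2}\right\}\subseteq \mc D,
\]
since for such a pair $\frac{1}{n}\log\frac{1}{P_{K|Y^n}(k|y^n)}\ge \bigl(\frac{1}{n}H(K)-\frac{\gamma(\alpha,\beta)}{2}\bigr)-\bigl(\frac{1}{n}I(K;Y^n)+\frac{\gamma(\alpha,\beta)}{2}\bigr)=\frac{1}{n}H(K|Y^n)-\gamma(\alpha,\beta)$. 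Thus $\mbb P[(K,Y^n)\in\mc D]$ is lower bounded by the probability of the intersection of the event $\{K\in\mc L\}$, which consumes the first half of the slack and already has probability at least $1-4\mu(\beta)/\gamma(\alpha,\beta)^2$ by Lemma~\ref{boundprobsett}, and the event $\{\frac{1}{n}i(K;Y^n)\le \frac{1}{n}I(K;Y^n)+\frac{\gamma(\alpha,\beta)}{2}\}$, which consumes the second half. The matching of the factor $\tfrac12$ in $\mc L$ with the full $\gamma(\alpha,\beta)$ in $\mc D$ is precisely what this decomposition exploits.

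To obtain the \emph{squared} bound rather than a mere union bound, I would not estimate the two events separately but instead condition on $Y^n$ and treat each as contributing one factor of $1-4\mu(\beta)/\gamma(\alpha,\beta)^2$. Concretely, I would pass to the conditional distribution $P_{K|Y^n=y^n}$, isolate a high-probability ``good'' set of outputs $y^n$ on which the conditional mass of $\mc L$ and the conditional upper tail of the information density are both controlled at level $1-4\mu(\beta)/\gamma(\alpha,\beta)^2$, and then write $\mbb P[(K,Y^n)\in\mc D]=\mbb E_{Y^n}\bigl[\mbb P[(K,Y^n)\in\mc D\mid Y^n]\bigr]$ so that the outer expectation supplies one factor and the inner conditional probability supplies the other. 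The target $(1-4\mu(\beta)/\gamma(\alpha,\beta)^2)^2$ is exactly the product that such a nested conditioning would yield, which is why I expect this to be the intended mechanism.

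The main obstacle is controlling the information-density event $\{\frac{1}{n}i(K;Y^n)> \frac{1}{n}I(K;Y^n)+\frac{\gamma(\alpha,\beta)}{2}\}$. Unlike the unconditional self-information $\frac{1}{n}\log\frac{1}{P_K(K)}$, whose variance is bounded by $\mu(\beta)$ in Lemma~\ref{upperboundvariance} thanks to the near-uniformity \eqref{uniformity} and the cardinality bound \eqref{cardinalitycorrelated}, the information density $\frac{1}{n}i(K;Y^n)=\frac{1}{n}\log\frac{1}{P_K(K)}-\frac{1}{n}\log\frac{1}{P_{K|Y^n}(K|Y^n)}$ involves the \emph{conditional} self-information, for which no analogous variance bound is available in general (the conditional distribution need not be near-uniform, so the cancellation used in the proof of Lemma~\ref{upperboundvariance} breaks down). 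Consequently a naive Chebyshev estimate on the information density fails, and simply combining the two events through a union bound would only give $1-8\mu(\beta)/\gamma(\alpha,\beta)^2$, which is weaker than the claimed square. The delicate point is therefore to produce a genuine product: I anticipate that this requires the change-of-measure technique of \cite{strongconverse}, used to transfer the concentration of the unconditional self-information on the set $\mc L$ to a conditional statement valid on a good set of $y^n$ of probability at least $1-4\mu(\beta)/\gamma(\alpha,\beta)^2$, after which the two factors multiply as desired.
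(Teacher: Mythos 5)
Your set inclusion is correct and your diagnosis of the obstacle is accurate, but the proposal has a genuine gap: the probability of the second event, $\left\{\frac{1}{n}i(K;Y^n)\leq\frac{1}{n}I(K;Y^n)+\frac{\gamma(\alpha,\beta)}{2}\right\}$, is never actually bounded. You correctly observe that Chebyshev fails here (there is no variance bound for the conditional self-information analogous to Lemma \ref{upperboundvariance}), and you then defer the whole difficulty to an ``anticipated'' change-of-measure argument that is not carried out; since that step is precisely the content of the lemma, nothing has been proved. Your structural guesses are also off the mark: the squared bound is not a genuine product obtained by nested conditioning on $Y^n$, and the change of measure of \cite{strongconverse} plays no role in this lemma at all --- it enters the converse only afterwards, when $(\tilde{X}^n,\tilde{Y}^n,\tilde{Z}^n)$ is defined from the set $\mc S$.

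The paper's mechanism requires no variance bound: it works per $k\in\mc L$ and uses Markov's inequality. For fixed $k$, the nonnegative random variable $P_{K|Y^n}(k|Y^n)$ has expectation $P_K(k)$ (over the marginal of $Y^n$), so Markov gives
\begin{align}
\mbb P\left[P_{K|Y^n}(k|Y^n)>2^{n\gamma(\alpha,\beta)-H(K|Y^n)}\right]\leq P_K(k)\,2^{H(K|Y^n)-n\gamma(\alpha,\beta)}\leq 2^{-n\frac{\gamma(\alpha,\beta)}{2}},\nonumber
\end{align}
where the last inequality uses $P_K(k)\leq 2^{n\frac{\gamma(\alpha,\beta)}{2}-H(K)}$ (the definition of $\mc L$) together with $H(K|Y^n)\leq H(K)$. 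Summing over $k\in\mc L$ yields $\mbb P\left[(K,Y^n)\in\mc D\right]\geq\left(1-2^{-n\gamma(\alpha,\beta)/2}\right)\mbb P\left[K\in\mc L\right]\geq\left(1-2^{-n\gamma(\alpha,\beta)/2}\right)\left(1-4\mu(\beta)/\gamma(\alpha,\beta)^2\right)$. The ``square'' in the statement is then purely cosmetic: for sufficiently large $n$ one has $2^{-n\gamma(\alpha,\beta)/2}\leq 4\mu(\beta)/\gamma(\alpha,\beta)^2$, so the exponentially small Markov term is absorbed into a second copy of $\left(1-4\mu(\beta)/\gamma(\alpha,\beta)^2\right)$. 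In other words, the half/full slack split you noticed exists not to host two concentration events but to make Markov effective: the extra $\gamma(\alpha,\beta)/2$ of slack in $\mc D$ dominates the at most $\gamma(\alpha,\beta)/2$ by which $P_K(k)$ can exceed $2^{-H(K)}$ on $\mc L$, leaving an exponentially small failure probability per $k$. One caveat in your favor: the paper's chain starts from $\sum_{k\in\mc L}\mbb P\left[(K,Y^n)\in\mc D\mid K=k\right]P_K(k)$, which conditions on $K=k$, while the Markov step is taken under the marginal of $Y^n$; this exchange of measures is glossed over in the paper, and it is the one point where your worry about conditional versus marginal distributions is genuinely on target --- but the cure for it is a careful accounting of these two measures, not the change-of-measure machinery you invoke.
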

\begin{proof}
We have
\begin{align}
    &\mbb P \left[(K,Y^n)\in \mc D\right]\nonumber \\ &\geq \sum_{k \in \mc L} \mbb P \left [(K,Y^n)\in \mc D|K=k\right]P_{K}(k) \nonumber \\
    &=\sum_{k \in \mc L}  \mbb P\left[\frac{1}{n}\log\frac{1}{P_{K|Y^n}(k|Y^n)} \geq \frac{1}{n}H(K|Y^n)-\gamma(\alpha,\beta)  \right] P_{K}(k) \nonumber \\
    &=\sum_{k \in \mc L}  \mbb P\left[P_{K|Y^n}(k|Y^n)\leq 2^{n\gamma(\alpha,\beta)-H(K|Y^n)} \right] P_{K}(k) \nonumber \\
     &\geq\sum_{k \in \mc L}  \mbb P\left[P_{K|Y^n}(k|Y^n)< 2^{n\gamma(\alpha,\beta)-H(K|Y^n)} \right] P_{K}(k) \nonumber \\
    &\overset{(a)}{\geq}\sum_{k\in \mc L}\left(1-\frac{P_{K}(k)}{2^{n\gamma(\alpha,\beta)-H(K|Y^n)}}\right) P_{K}(k) \nonumber \\
    &\overset{(b)}{\geq}\left(1-\frac{2^{n\frac{\gamma(\alpha,\beta)}{2}-H(K)}}{2^{n\gamma(\alpha,\beta)-H(K|Y^n)}}\right) \sum_{k\in \mc L} P_{K}(k) \nonumber \\
    &=\left(1-\frac{2^{n\frac{\gamma(\alpha,\beta)}{2}-H(K)}}{2^{n\gamma(\alpha,\beta)-H(K|Y^n)}}\right) \mbb P\left[ K\in \mc L \right] \nonumber \\
    &=\left(1-2^{\left[-n\frac{\gamma(\alpha,\beta)}{2}+H(K|Y^n)-H(K)\right]}\right) \mbb P\left[ K\in \mc L \right] \nonumber \\
    &\overset{(c)}{\geq}(1-2^{-n\frac{\gamma(\alpha,\beta)}{2}})\mbb P\left[ K\in \mc L \right]  \nonumber \\
    &\overset{(d)}{\geq}(1-2^{-n\frac{\gamma(\alpha,\beta)}{2}})(1-\frac{4\mu(\beta)}{\gamma(\alpha,\beta)^2}), \nonumber 
   \end{align}
  where $(a)$ follows from Markov's inequality since $P_{K}(k)=\mbb E\left[P_{K|Y^n}(k|Y^n)\right],$ $(b)$ follows because for $k\in \mc L,$  we know that $ P_{K}(k) \leq 2^{n\frac{\gamma(\alpha,\beta)}{2}-H(K)},$ $(c)$ follows because $H(K|Y^n)-H(K)\leq 0$ and $(d)$ follows from Lemma \ref{boundprobsett}.
  
  Since $\underset{n\rightarrow \infty}{\lim} 1-2^{-n\frac{\gamma(\alpha,\beta)}{2}}=1,$ it follows that for sufficiently large $n$
  \begin{align}
  \mbb P \left[(K,Y^n)\in \mc D\right]  \geq \left(1-4\frac{\mu(\beta)}{\gamma(\alpha,\beta)^2}\right)^2. \nonumber
  \end{align}
  
  This proves Lemma \ref{boundprobsetd}.
\end{proof}

Define now the sets
\begin{align}
     \mc S_{1} =\{ (x^n,y^n,z^n)\in \mc X^{n}\times \mc Y^{n}\times \mc Z^{n}: \frac{1}{n} i(\Lambda(x^n);z^n) \leq \underline{I}(T;Z)+\epsilon\},\nonumber
\end{align}
\begin{align}
    \mc S_{2}=\{(x^n,y^n,z^n) \in \mc X^{n}\times \mc Y^{n}\times \mc Z^{n}: \Phi(x^n)=\Psi(y^n,z^n)   \}, \nonumber
\end{align}
and
\begin{align}
    \mc S_{3}=\{(x^n,y^n,z^n)\in \mc X^{n}\times \mc Y^{n}\times \mc Z^{n}: (\Phi(x^n),y^n)\in \mc D\}. \nonumber
\end{align}

Let $\mc S=\mc S_{1} \cap \mc S_{2} \cap \mc S_{3}.$ Analogously to \cite{strongconverse}, we change the probability measure by defining 
\begin{align}
  &P_{\tilde{X}^n,\tilde{Y}^n,\tilde{Z}^n}(x^n,y^n,z^n) \nonumber \\
  &=\frac{P_{X^n,Y^n,Z^n}(x^n,y^n,z^n) \mbf 1 \left[(x^n,y^n,z^n)\in \mc S  \right]}{\mbb P \left[(X^n,Y^n,Z^n) \in \mc S\right]}, \nonumber
\end{align}
where $\mbf 1[\cdot]$ is the indicator function.
In order to show that $(\tilde{X}^n,\tilde{Y}^n,\tilde{Z}^n)$ is well-defined, we will show that $\mbb P\left[(X^n,Y^n,Z^n)\in \mc S \right]>0$ in what follows.
\begin{lemma} 
For sufficiently large $n$ satisfying \eqref{choiceepsilon}, it holds that
\label{probS}
\begin{align*}
   \mbb P \left[ (X^n,Y^n,Z^n)\in \mc S\right]\geq \kappa(\alpha,\beta)>0. \nonumber  
\end{align*}
\end{lemma}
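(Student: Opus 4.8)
The plan is to lower-bound $\mbb P[(X^n,Y^n,Z^n)\in\mc S]$ by applying, twice in succession, the elementary intersection bound $\mbb P[A\cap B]\geq\mbb P[A]+\mbb P[B]-1$ (which is just the union bound applied to $A^c\cup B^c$), feeding in the three estimates already available for $\mc S_1,\mc S_2,\mc S_3$. The key observation driving the argument is that $\kappa(\alpha,\beta)$ was defined precisely so that these estimates recombine to exactly $\kappa(\alpha,\beta)$.

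First I would treat $\mc S_2$ and $\mc S_3$ together. Since $\mc S_2$ is the event $\{K=L\}$ and $\mc S_3$ is the event $\{(K,Y^n)\in\mc D\}$, we have $\mbb P[(X^n,Y^n,Z^n)\in\mc S_2]=\mbb P[K=L]\geq 1-\alpha$ from \eqref{errorcorrelated}, and $\mbb P[(X^n,Y^n,Z^n)\in\mc S_3]=\mbb P[(K,Y^n)\in\mc D]\geq\left(1-4\mu(\beta)/\gamma(\alpha,\beta)^2\right)^2$ from Lemma \ref{boundprobsetd}, valid for sufficiently large $n$. The intersection bound then gives
\begin{align}
\mbb P[(X^n,Y^n,Z^n)\in\mc S_2\cap\mc S_3]&\geq(1-\alpha)+\left(1-\frac{4\mu(\beta)}{\gamma(\alpha,\beta)^2}\right)^2-1\nonumber\\
&=\left(1-\frac{4\mu(\beta)}{\gamma(\alpha,\beta)^2}\right)^2-\alpha,\nonumber
\end{align}
and by the very definition of $\kappa(\alpha,\beta)$ the right-hand side equals $1-\kappa(\alpha,\beta)$.

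Second I would bring in $\mc S_1$. Because $T^n=\Lambda(X^n)$, the defining property \eqref{choiceepsilon} of $\epsilon$ reads exactly $\mbb P[(X^n,Y^n,Z^n)\in\mc S_1]>2\kappa(\alpha,\beta)$ for the $n$ under consideration. Applying the intersection bound once more, now to $\mc S_1$ and $\mc S_2\cap\mc S_3$,
\begin{align}
\mbb P[(X^n,Y^n,Z^n)\in\mc S]&>2\kappa(\alpha,\beta)+\bigl(1-\kappa(\alpha,\beta)\bigr)-1\nonumber\\
&=\kappa(\alpha,\beta),\nonumber
\end{align}
and positivity $\kappa(\alpha,\beta)>0$ follows directly from the standing assumption \eqref{req2}, completing the proof.

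I do not expect a genuine obstacle in the estimation itself, since it reduces to two invocations of a one-line inequality. The delicate point, already absorbed into the setup, is that the factor $2$ in front of $\kappa(\alpha,\beta)$ in \eqref{choiceepsilon} is exactly what survives the two successive losses of $1$ incurred by the intersection bound; any weaker lower bound on $\mbb P[\mc S_1]$ would fail to close the argument. I would therefore be careful to select an $n$ that simultaneously satisfies \eqref{choiceepsilon} (which holds for infinitely many $n$) and is large enough for Lemma \ref{boundprobsetd} to apply, so that both feeds are valid for the same $n$; since \eqref{choiceepsilon} holds infinitely often, such arbitrarily large $n$ indeed exist.
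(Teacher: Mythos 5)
Your proposal is correct and is essentially the paper's own argument: the paper lower-bounds $\mbb P[(X^n,Y^n,Z^n)\in \mc S]$ by $1-\mbb P[(X^n,Y^n,Z^n)\notin\mc S_3]-\mbb P[(X^n,Y^n,Z^n)\notin\mc S_2]-\mbb P[(X^n,Y^n,Z^n)\notin\mc S_1]$ in a single step, which is exactly your intersection bound applied twice, and it feeds in the same three estimates (\eqref{errorcorrelated} for $\mc S_2$, Lemma \ref{boundprobsetd} for $\mc S_3$, and \eqref{choiceepsilon} for $\mc S_1$) with the same arithmetic yielding $\kappa(\alpha,\beta)$. Your closing remark about choosing $n$ both large enough for Lemma \ref{boundprobsetd} and satisfying \eqref{choiceepsilon} is precisely the meaning of the lemma's hypothesis ``for sufficiently large $n$ satisfying \eqref{choiceepsilon}.''
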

\begin{proof}
It holds for $   \kappa(\alpha,\beta)=\alpha+\left[1-\left(1-4\frac{\mu(\beta)}{\gamma(\alpha,\beta)^2}\right)^2\right]$ that
\begin{align}
    &\mbb P\left[ (X^n,Y^n,Z^n) \in \mc S\right] \nonumber \\
    &\geq 1- \mbb P\left[(X^n,Y^n) \notin \mc S_{3}\right]-\mbb P\left[(X^n,Y^n,Z^n) \notin \mc S_{2}\right] - \mbb P\left[(X^n,Y^n,Z^n)\notin \mc S_{1}\right]
    \nonumber \\
   &= 1-\mbb P\left[(K,Y^n)\notin\mc D\right]-\mbb P\left[K\neq L\right] -\mbb P\left[(X^n,Y^n,Z^n)\notin \mc S_{1}\right]  \nonumber \\
   &\geq 1-\left[1-\left(1-4\frac{\mu(\beta)}{\gamma(\alpha,\beta)^2}\right)^2\right]-\alpha -\mbb P\left[(X^n,Y^n,Z^n)\notin \mc S_{1}\right] \nonumber \\
   &=1-\kappa(\alpha,\beta)-\mbb P\left[(X^n,Y^n,Z^n)\notin \mc S_{1}\right]  \nonumber \\
   &=1-\kappa(\alpha,\beta)-\mbb P\left[\frac{1}{n}i(T^n;Z^n) > \underline{I}(T;Z)+\epsilon \right] \nonumber \\
   &\overset{(a)}{\geq}  1-\kappa(\alpha,\beta)-\left( 1-2\kappa(\alpha,\beta)\right)                    \nonumber\\
   &=\kappa(\alpha,\beta), \nonumber \\
\nonumber \end{align}
where $(a)$ follows from the choice of $\epsilon$ in \eqref{choiceepsilon}. This proves Lemma \ref{probS}.
\end{proof}
Consider an execution of the protocol of block-length $n$ satisfying \eqref{choiceepsilon} for input $(\tilde{X}^n,\tilde{Y}^n)\sim P_{\tilde{X}^n,\tilde{Y^n}}.$
Let $\tilde{K}=\Phi(\tilde{X}^n).$ Let $\tilde{T}^n=\Lambda(\tilde{X}^n)$ be the new channel input sequence. $\tilde{Z}^n$ is then its corresponding output sequence. We further define $\tilde{L}=\Psi(\tilde{Y}^n,\tilde{Z}^n).$ Here, $\tilde{K}$ is equal to $\tilde{L}$ with probability one. Furthermore, since for every $(x^n,y^n,z^n)\in \mc S,$ 
\begin{align}
      \frac{1}{n}i\left(\Lambda(x^n),z^n\right)\leq \underline{I}(\bs{T};\bs{Z})+\epsilon, \nonumber
\end{align}
we have
\begin{align}
    \mbb E\left[ \frac{1}{n}i\left(\Lambda(\tilde{X}^n),\tilde{Z}^n\right)\right]\leq \underline{I}(\bs{T};\bs{Z})+\epsilon.
\nonumber \end{align}
Therefore
\begin{align}
    \frac{1}{n}I(\tilde{T}^n;\tilde{Z}^n) \leq \underline{I}(\bs{T};\bs{Z})+\epsilon. \label{upperboundnewmutinf} \end{align}
\begin{lemma} It holds that
\label{upperboundlogcardinality}
\begin{align}
    H(K|Y^n) &\leq n\gamma(\alpha,\beta)+\log\frac{1}{\kappa(\alpha,\beta)}+H(\tilde{K}|\tilde{Y}^n). \nonumber
\end{align}
\end{lemma}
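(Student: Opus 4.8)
The plan is to transport the defining inequality of the set $\mc D$ through the change of measure, and then pay for the normalization with a single Jensen step. For every $(k,y^n)\in\mc D$ the definition of $\mc D$ reads $\tfrac1n\log\tfrac{1}{P_{K|Y^n}(k|y^n)}\geq\tfrac1n H(K|Y^n)-\gamma(\alpha,\beta)$, i.e. $P_{K|Y^n}(k|y^n)\leq 2^{n\gamma(\alpha,\beta)-H(K|Y^n)}$. Because $\mc S\subseteq\mc S_3$, every $(x^n,y^n,z^n)\in\mc S$ satisfies $(\Phi(x^n),y^n)\in\mc D$, so the law $P_{\tilde K\tilde Y^n}$ of $(\tilde K,\tilde Y^n)=(\Phi(\tilde X^n),\tilde Y^n)$ is supported on pairs lying in $\mc D$, and the bound above holds $P_{\tilde K\tilde Y^n}$-almost surely. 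Writing $H(\tilde K|\tilde Y^n)=\sum_{k,y^n}P_{\tilde K\tilde Y^n}(k,y^n)\log\tfrac{1}{P_{\tilde K|\tilde Y^n}(k|y^n)}$, it then suffices to lower bound $\log\tfrac{1}{P_{\tilde K|\tilde Y^n}(k|y^n)}$ on this support.

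The crux is a pointwise comparison of $P_{\tilde K|\tilde Y^n}$ with $P_{K|Y^n}$. I would set $q(y^n)=\mbb P[(X^n,Y^n,Z^n)\in\mc S\,|\,Y^n=y^n]$. In the conditional $P_{\tilde K|\tilde Y^n}=P_{\tilde K\tilde Y^n}/P_{\tilde Y^n}$ the normalizing factor $\mbb P[\mc S]$ cancels, so $P_{\tilde K|\tilde Y^n}(k|y^n)$ is $\sum_{x^n:\Phi(x^n)=k}\sum_{z^n}P_{X^nY^nZ^n}(x^n,y^n,z^n)\mbf 1[\mc S]$ divided by the same sum without the constraint $\Phi(x^n)=k$. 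Dropping the indicator $\mbf 1[\mc S]\leq 1$ in the numerator and keeping it in the denominator, the numerator is at most $P_{KY^n}(k,y^n)$ while the denominator equals $q(y^n)P_{Y^n}(y^n)$; hence $P_{\tilde K|\tilde Y^n}(k|y^n)\leq P_{K|Y^n}(k|y^n)/q(y^n)$. Combined with the $\mc D$-bound this gives, on the support of $P_{\tilde K\tilde Y^n}$,
\[
\log\frac{1}{P_{\tilde K|\tilde Y^n}(k|y^n)}\geq H(K|Y^n)-n\gamma(\alpha,\beta)+\log q(y^n).
\]

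Taking expectation under the tilded measure yields $H(\tilde K|\tilde Y^n)\geq H(K|Y^n)-n\gamma(\alpha,\beta)+\sum_{y^n}P_{\tilde Y^n}(y^n)\log q(y^n)$, so it remains to show $\sum_{y^n}P_{\tilde Y^n}(y^n)\log\tfrac{1}{q(y^n)}\leq\log\tfrac{1}{\kappa(\alpha,\beta)}$. Here I would invoke concavity of $\log$: since $P_{\tilde Y^n}(y^n)=q(y^n)P_{Y^n}(y^n)/\mbb P[\mc S]$, one has $\sum_{y^n}P_{\tilde Y^n}(y^n)\tfrac{1}{q(y^n)}=\tfrac{1}{\mbb P[\mc S]}\sum_{y^n}P_{Y^n}(y^n)=\tfrac{1}{\mbb P[\mc S]}$, whence Jensen's inequality gives $\sum_{y^n}P_{\tilde Y^n}(y^n)\log\tfrac{1}{q(y^n)}\leq\log\tfrac{1}{\mbb P[\mc S]}\leq\log\tfrac{1}{\kappa(\alpha,\beta)}$, the last inequality by Lemma \ref{probS}. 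Rearranging proves the claim. I expect the main obstacle to be the conditional comparison $P_{\tilde K|\tilde Y^n}\leq P_{K|Y^n}/q$: one must resist the temptation to bound the joint law crudely by $\mbb P[\mc S]^{-1}P_{KY^n}$, which would only yield the useless multiplicative factor $\kappa(\alpha,\beta)^{-1}H(K|Y^n)$, and instead track numerator and denominator separately so that the slack collapses, through Jensen, into the additive term $\log\kappa(\alpha,\beta)^{-1}$.
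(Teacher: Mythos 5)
Your proof is correct, and while it shares the overall skeleton of the paper's argument — restrict to the support of $P_{\tilde{K},\tilde{Y}^n}$, which lies in $\mc D$ because $\mc S\subseteq \mc S_3$, transport the defining inequality of $\mc D$ through the change of measure, and pay $\log\frac{1}{\kappa(\alpha,\beta)}$ via Lemma \ref{probS} — the bookkeeping in the middle is genuinely different. The paper never forms the tilded conditional: it works with the mixed ratio $P_{\tilde{K},\tilde{Y}^n}(k,y^n)/P_{Y^n}(y^n)$ (tilded joint over \emph{untilded} marginal), bounds it uniformly on the support by $2^{n\gamma(\alpha,\beta)-H(K|Y^n)}/\kappa(\alpha,\beta)$, then converts the minimum of its negative logarithm into $H(\tilde{K}|\tilde{Y}^n)-D(P_{\tilde{Y}^n}||P_{Y^n})$ via min $\leq$ mean, and discards the divergence term. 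You instead bound the genuine conditional, $P_{\tilde{K}|\tilde{Y}^n}(k|y^n)\leq P_{K|Y^n}(k|y^n)/q(y^n)$ with the $y^n$-dependent normalization $q(y^n)$, take expectations directly, and control $\mbb E[\log(1/q(\tilde{Y}^n))]$ by Jensen. These two finishes are the same inequality in disguise: since $P_{\tilde{Y}^n}(y^n)=q(y^n)P_{Y^n}(y^n)/\mbb P[(X^n,Y^n,Z^n)\in\mc S]$, one has $\mbb E[\log(1/q(\tilde{Y}^n))]=\log(1/\mbb P[(X^n,Y^n,Z^n)\in\mc S])-D(P_{\tilde{Y}^n}||P_{Y^n})$, so your Jensen step is precisely the non-negativity of the divergence the paper drops. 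Two small remarks. First, your closing warning — that bounding the joint law by $\mbb P[(X^n,Y^n,Z^n)\in\mc S]^{-1}P_{K,Y^n}$ leads to a useless multiplicative factor — is off target: that crude bound is exactly what the paper uses, and it stays additive because the paper keeps $P_{Y^n}$ rather than $P_{\tilde{Y}^n}$ in the denominator and absorbs the mismatch into the discarded divergence. Second, your identity $\sum_{y^n}P_{\tilde{Y}^n}(y^n)/q(y^n)=1/\mbb P[(X^n,Y^n,Z^n)\in\mc S]$ should be ``$\leq$'' since the sum runs only over the support of $P_{\tilde{Y}^n}$; this is the direction Jensen needs, so nothing breaks.
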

\begin{proof}
We have
\begin{align}
   \frac{P_{\tilde{K},\tilde{Y}^n}(k,y^n)}{P_{Y^n}(y^n)}\nonumber 
    &=\frac{1}{P_{Y^n}(y^n)}\sum_{\substack{x^n,z^n \\(x^n,y^n,z^n)\in \mc S \\ \Phi(x^n)=k}} P_{\tilde{X}^n,\tilde{Y}^n,\tilde{Z}^n}(x^n,y^n,z^n)\nonumber \\
    &=\frac{1}{P_{Y^n}(y^n)}\sum_{\substack{x^n,z^n \\(x^n,y^n,z^n)\in \mc S \\ \Phi(x^n)=k}} \frac{P_{X^n,Y^n,Z^n}(x^n,y^n,z^n)}{\mbb P\left[(X^n,Y^n,Z^n)\in \mc S\right]} \nonumber \\
      &\leq \frac{P_{K,Y^n}(k,y^n)}{P_{Y^n}(y^n)\mbb P\left[(X^n,Y^n,Z^n)\in \mc S\right]} \nonumber \\
       &= \frac{P_{K|Y^n}(k|y^n)}{\mbb P\left[(X^n,Y^n,Z^n)\in \mc S\right]}  \nonumber\\
         &\overset{(a)}{\leq} \frac{2^{n\gamma(\alpha,\beta)}}{2^{H(K|Y^n)} \mbb P\left[(X^n,Y^n,Z^n)\in \mc S\right]} \nonumber \\
         &\leq  \frac{2^{n\gamma(\alpha,\beta)}}{2^{H(K|Y^n)} \kappa(\alpha,\beta)} , \nonumber
         \end{align}
where $(a)$ follows because $(k,y^n)\in \mc D.$ 
Therefore for every $(k,y^n) \in \text{supp}(P_{\tilde{K},\tilde{Y}^n}),$ we have
\begin{align}
    2^{H(K|Y^n)} \leq \frac{2^{n\gamma(\alpha,\beta)}}{\kappa(\alpha,\beta)} \frac{1}{\frac{P_{\tilde{K},\tilde{Y}^n}(k,y^n)}{P_{Y^n}(y^n)}}. \nonumber
\end{align}
This implies that for any $(k,y^n) \in \text{supp}(P_{\tilde{K},\tilde{Y}^n}),$ we have
\begin{align}
H(K|Y^n)&\leq \log\frac{2^{n\gamma(\alpha,\beta)}}{\kappa(\alpha,\beta)}-\log\frac{P_{\tilde{K},\tilde{Y}^n}(k,y^n)}{P_{Y^n}(y^n)}. \nonumber 
\end{align}
As a result, it follows that
\begin{align}
 H(K|Y^n) &\leq \log\frac{2^{n\gamma(\alpha,\beta)}}{\kappa(\alpha,\beta)}+\underset{(k,y^n)\in \text{supp}(P_{\tilde{K},\tilde{Y}^n})}{\min}-\log\frac{P_{\tilde{K},\tilde{Y}^n}(k,y^n)}{P_{Y^n}(y^n)} \nonumber   
\end{align}
Now, it holds that
\begin{align}
    &\underset{(k,y^n)\in \text{supp}(P_{\tilde{K},\tilde{Y}^n})}{\min}-\log\frac{P_{\tilde{K},\tilde{Y}^n}(k,y^n)}{P_{Y^n}(y^n)} \nonumber \\ &\leq \mbb E\left[-\log\frac{P_{\tilde{K},\tilde{Y}^n}(\tilde{K},\tilde{Y}^n)}{P_{Y^n}(\tilde{Y}^n)}\right] \nonumber \\ 
    &=\mbb E\left[-\log P_{\tilde{K}|\tilde{Y}^n}(\tilde{K}|\tilde{Y}^n) \right]-\mbb E\left[\log\frac{P_{\tilde{Y}^n}(\tilde{Y}^n)}{P_{Y^n}(\tilde{Y}^n)}\right]\nonumber \\
    &=H(\tilde{K}|\tilde{Y}^n)-D(P_{\tilde{Y}^n}||P_{Y^n}) \nonumber \\ \nonumber \\
    &\leq H(\tilde{K}|\tilde{Y}^n), \nonumber
\end{align}
where $D(\cdot||\cdot)$ denotes the relative entropy. It follows that
\begin{align}
H(K|Y^n) &\leq \log\frac{2^{n\gamma(\alpha,\beta)}}{\kappa(\alpha,\beta)}+H(\tilde{K}|\tilde{Y}^n) \nonumber \\
&=n\gamma(\alpha,\beta)+\log\frac{1}{\kappa(\alpha,\beta)}+H(\tilde{K}|\tilde{Y}^n).\nonumber
\end{align}
This completes the proof of the lemma.
\end{proof}
Now, it holds using lemma \ref{upperboundlogcardinality} that
\begin{align}
   \frac{1}{n} H(K|Y^n) &\leq \frac{1}{n}H(\tilde{K}|\tilde{Y}^n)+\gamma(\alpha,\beta)+\frac{1}{n}\log\frac{1}{\kappa(\alpha,\beta)}.
\label{entropyentropytilde}
\end{align}

We have
\begin{equation}
\frac{1}{n}H(\tilde{K}|\tilde{Y}^{n})=\frac{1}{n}I(\tilde{K};\tilde{Z}^{n}|\tilde{Y}^{n})+\frac{1}{n}H(\tilde{K}|\tilde{Y}^{n},\tilde{Z}^{n}). \label{entropyKtilconditionedYtilde}
\end{equation}
On the one hand, it holds that
\begin{align} 
\frac{1}{n} I(\tilde{K};\tilde{Z}^{n}|\tilde{Y}^{n})&\leq \frac{1}{n} I(\tilde{X}^{n}\tilde{K};\tilde{Z}^{n}|\tilde{Y}^{n}) \nonumber\\
& \overset{(a)}{\leq }\frac{1}{n} I(\tilde{T}^n;\tilde{Z}^n|\tilde{Y}^{n})  \nonumber \\
& =  \frac{1}{n} H(\tilde{Z}^n|\tilde{Y}^{n})-\frac{1}{n}  H(\tilde{Z}^n|\tilde{T}^n,\tilde{Y}^{n}) \nonumber \\
& \overset{(b)}{=} \frac{1}{n}  H(\tilde{Z}^n|\tilde{Y}^{n})-\frac{1}{n}  H(\tilde{Z}^n|\tilde{T}^n) \nonumber \\
& \overset{(c)}{\leq }   \frac{1}{n} H(\tilde{Z}^n)-\frac{1}{n}  H(\tilde{Z}^n|\tilde{T}^n) \nonumber \\
& = \frac{1}{n} I(\tilde{T}^n;\tilde{Z}^n)  \nonumber \\
&\overset{(d)}{\leq} \underline{I}(\bs{T};\bs{Z})+\epsilon,
\nonumber
\end{align}
where $(a)$ follows from the Data Processing Inequality because $\tilde{Y}^{n}\circlearrow{\tilde{X}^{n}\tilde{K}}\circlearrow{\tilde{T}^n}\circlearrow{\tilde{Z}^{n}}$ forms a Markov chain, $(b)$ follows because $\tilde{Y}^{n}\circlearrow{\tilde{X}^{n}\tilde{K}}\circlearrow{\tilde{T}^n}\circlearrow{\tilde{Z}^{n}}$ forms a Markov chain, $(c)$ follows because conditioning does not increase entropy and $(d)$ follows from \eqref{upperboundnewmutinf}.

On the other hand, since $\tilde{K}$ is equal to $\tilde{L}=\Psi(\tilde{Y}^n,\tilde{Z}^n)$ with probability one, it holds that
$$H(\tilde{K}|\tilde{Y}^n,\tilde{Z}^n)=0.$$ 

As a result, we have using  \eqref{entropyKtilconditionedYtilde}
\begin{align}
    \frac{1}{n}H(\tilde{K}|\tilde{Y}^n)&\leq \underline{I}(\bs{T};\bs{Z})+\epsilon. \nonumber 
\end{align}

From \eqref{entropyentropytilde}, it follows that for $\alpha,\beta>0$ satisfying $0<\alpha<1,$ $0<\mu(\beta)<1$ and  $0<\kappa(\alpha,\beta)<\frac{1}{2}$ and for sufficiently large $n$ satisfying \eqref{choiceepsilon}, we have
\begin{align}
    \frac{1}{n}H(K|Y^n)\leq \underline{I}(\bs{T};\bs{Z})+\epsilon+\gamma(\alpha,\beta)+\frac{1}{n}\log\frac{1}{\kappa(\alpha,\beta)}. \nonumber
\end{align}

 From the definition of $\underline{I}(\bs{T};\bs{Z}),$ we know that 
 for any $\zeta>0,$
 \begin{align}
    0< \underset{n\rightarrow\infty}{\lim}\mbb P\left[\frac{1}{n}i(T^n;Z^n) \leq \underline{I}(\bs{T};\bs{Z})+\zeta \right]. \nonumber
 \end{align}
 By choosing $\alpha$ and $\beta$ to be arbitrarily small positive constants, $\mu(\beta)$, $\gamma(\alpha,\beta)$ and $\kappa(\alpha,\beta)$ are made arbitrarily small. We can then choose an arbitrarily small $\epsilon>0$ satisfying 
  \begin{align}
      2\kappa(\alpha,\beta)< \underset{n\rightarrow\infty}{\lim}\mbb P\left[\frac{1}{n}i(T^n;Z^n) \leq \underline{I}(\bs{T};\bs{Z})+\epsilon \right]. \nonumber
 \end{align}

 As a result, for sufficiently large $n$ satisfying $\eqref{choiceepsilon},$ we have
 \begin{align}
     \frac{H(K|Y^n)}{n}\leq \underline{I}(\bs{T};\bs{Z})+\alpha'(n) \nonumber
 \end{align}
for some $\alpha'(n),$ where $\underset{n\rightarrow \infty}{\lim}\alpha'(n)$ can be made arbitrarily small for $\alpha,\beta,\epsilon$ chosen arbitrarily small.
 From Theorem \ref{generalcapacityformula}, we know that
 \begin{align}
     \underline{I}(\bs{T};\bs{Z})\leq C(\bs W).\nonumber
 \end{align}
We deduce then that for sufficiently large $n$ satisfying \eqref{choiceepsilon}
  \begin{align}
     \frac{H(K|Y^n)}{n}\leq C(\bs W)+\alpha'(n). \nonumber
 \end{align}
 This completes the proof of the claim.
\end{claimproof}
 
 Now that we proved the claim, we continue with the converse proof. In our proof, we will use  the following lemma: 
\begin{lemma} (Lemma 17.12 in \cite{codingtheorems})
For arbitrary random variables $S$ and $R$ and sequences of random variables $X^{n}$ and $Y^{n}$, it holds that
\begin{align}
 &I(S;X^{n}|R)-I(S;Y^{n}|R) \nonumber \\ 
 &=\sum_{i=1}^{n} I(S;X_{i}|X_{1},\dots, X_{i-1}, Y_{i+1},\dots, Y_{n},R) \nonumber \\ &\quad -\sum_{i=1}^{n} I(S;Y_{i}|X_{1},\dots, X_{i-1}, Y_{i+1},\dots, Y_{n},R) \nonumber \\
 &=n[I(S;X_{J}|V)-I(S;Y_{J}|V)],\nonumber
\end{align}
where $V=(X_{1},\dots, X_{J-1},Y_{J+1},\dots, Y_{n},R,J)$, with $J$ being a random variable independent of $R$,\ $S$, \ $X^{n}$ \ and $Y^{n}$ and uniformly distributed on $\{1 ,\dots, n \}$.
\label{lemma1}
\end{lemma}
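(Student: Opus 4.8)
The plan is to prove the first (summation) identity by a telescoping argument that interpolates between the blocks $X^n$ and $Y^n$, and then to obtain the single-letter form on the right by averaging over the uniform time-sharing index $J$. No information-measure inequality is required; only the chain rule for mutual information and the independence of $J$ enter.

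For the first equality, I would introduce, for each $i\in\{0,1,\dots,n\}$, the hybrid quantity
\[
H_i = I(S;X_1,\dots,X_i,Y_{i+1},\dots,Y_n \mid R),
\]
which interpolates between the two endpoints $H_0 = I(S;Y^n\mid R)$ and $H_n = I(S;X^n\mid R)$, so that $I(S;X^n\mid R)-I(S;Y^n\mid R) = \sum_{i=1}^n (H_i-H_{i-1})$. The key observation is that $H_i$ and $H_{i-1}$ share the common conditioning block $(X_1,\dots,X_{i-1},Y_{i+1},\dots,Y_n,R)$; the former additionally reveals $X_i$ and the latter $Y_i$. Applying the chain rule to peel off $X_i$ from $H_i$ and $Y_i$ from $H_{i-1}$, the common term $I(S;X_1,\dots,X_{i-1},Y_{i+1},\dots,Y_n\mid R)$ cancels, leaving
\[
H_i-H_{i-1} = I(S;X_i\mid X_1,\dots,X_{i-1},Y_{i+1},\dots,Y_n,R) - I(S;Y_i\mid X_1,\dots,X_{i-1},Y_{i+1},\dots,Y_n,R).
\]
Summing over $i$ gives the first claimed equality.

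For the second equality, I would use the time-sharing variable $J$, uniform on $\{1,\dots,n\}$ and independent of $(S,R,X^n,Y^n)$, together with $V=(X_1,\dots,X_{J-1},Y_{J+1},\dots,Y_n,R,J)$. Writing out $I(S;X_J\mid V)$ by conditioning on the events $\{J=i\}$, and using that $J$ is independent of all remaining variables so that conditioning on $\{J=i\}$ leaves their joint law unchanged, each such term equals $\tfrac{1}{n}\sum_{i=1}^n I(S;X_i\mid X_1,\dots,X_{i-1},Y_{i+1},\dots,Y_n,R)$; the identical averaging applies to $I(S;Y_J\mid V)$. Multiplying by $n$ and subtracting reproduces the summation form, closing the chain of equalities. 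The only delicate point throughout is bookkeeping---matching the conditioning sets so that the chain rule yields a clean cancellation, and confirming that independence of $J$ genuinely converts the conditional mutual information into the unweighted average---so I do not anticipate a substantive obstacle.
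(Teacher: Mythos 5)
Your proof is correct and follows the same route as the source the paper relies on: the paper does not prove this lemma itself but cites it as Lemma 17.12 of Csisz\'ar and K\"orner, whose argument is precisely your telescoping interpolation between $I(S;X^n\mid R)$ and $I(S;Y^n\mid R)$ via the hybrid quantities $H_i$, followed by averaging over the uniform time-sharing index $J$. Both key steps---the chain-rule cancellation of the common term $I(S;X_1,\dots,X_{i-1},Y_{i+1},\dots,Y_n\mid R)$ and the use of the independence of $J$ to convert $I(S;X_J\mid V)$ into the unweighted average---are handled correctly.
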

Let $J$ be a random variable uniformly distributed on $\{1,\dots, n\}$ and independent of $K$, $X^n$ and $Y^n$. We further define $U=(K,X_{1},\dots, X_{J-1},Y_{J+1},\dots, Y_{n},J).$ It holds that $U \circlearrow{X_J} \circlearrow{Y_J}.$ \\
Notice  that
{{\begin{align}
H(K)&\overset{(a)}{=}H(K)-H(K|X^{n})\nonumber\\
&=I(K;X^{n}) \nonumber\\
&\overset{(b)}{=}\sum_{i=1}^{n} I(K;X_{i}|X_{1},\dots, X_{i-1}) \nonumber\\
&=n I(K;X_{J}|X_{1},\dots, X_{J-1},J) \nonumber\\
&\overset{(c)}{\leq }n I(U;X_{J}), \nonumber
\end{align}}} where $(a)$ follows because $K=\Phi(X^n)$ and $(b)$ and $(c)$ follow from the chain rule for mutual information.
Applying Lemma \ref{lemma1} for $S=K$, $R=\varnothing$ with $V=(X_1,\hdots, X_{J-1},Y_{J+1},\hdots, Y_{n},J)$ yields
\begin{align}
&I(K;X^{n})-I(K;Y^{n}) 
\nonumber \\&=n[I(K;X_{J}|V)-I(K;Y_{J}|V)] \nonumber\\
&\overset{(a)}{=}n[I(KV;X_{J})-I(K;V)-I(KV;Y_{J})+I(K;V)] \nonumber\\
&\overset{(b)}{=}n[I(U;X_{J})-I(U;Y_{J})], 
\label{UhilfsvariableMIMO1}
\end{align}
where $(a)$ follows from the chain rule for mutual information and $(b)$ follows from $U=(K,V)$. \\
It results using (\ref{UhilfsvariableMIMO1}) that
\begin{align}
n[I(U;X_{J})-I(U;Y_{J})]
&=I(K;X^{n})-I(K;Y^{n}) \nonumber\\
&=H(K)-I(K;Y^{n})\nonumber \\ 
&=H(K|Y^n)
\label{touseclaim}
\end{align}
It follows from \eqref{touseclaim} using Claim \ref{claim1} that for sufficiently large $n$ satisfying \eqref{choiceepsilon}
\begin{align}
    I(U;X_{J})-I(U;Y_{J}) \leq C(\bs W)+\alpha'(n). \nonumber
\end{align}

Since the joint distribution of $X_{J}$ and $Y_{J}$ is equal to $P_{XY}$, for sufficiently large $n$ satisfying $\eqref{choiceepsilon},$ $\frac{H(K)}{n}$ is upper-bounded by $I(U;X)$ subject to $I(U;X)-I(U;Y) \leq C(\bs W) + \alpha'(n)$ with $U$ satisfying $U \circlearrow{X} \circlearrow{Y}$. As a result, for sufficiently large $n$ satisfying \eqref{choiceepsilon}, it follows that
\begin{align}
    \frac{H(K)}{n} \leq \underset{ \substack{U \\{\substack{U \circlearrow{X} \circlearrow{Y}\\ I(U;X)-I(U;Y) \leq C(\bs W)+\alpha'(n)}}}}{\max} I(U;X). \nonumber 
\end{align}
This completes the converse proof of Theorem \ref{main theorem}.
\newpage
\section{Direct Proof of Theorem \ref{main theorem}}
\label{directpart}
The proof of the direct part is rather standard. It suffices to  extend the UCR generation scheme provided in \cite{part2} to arbitrary single-user channels.
\subsubsection{If $ C(\bs W)=0$} In this case, no communication over the channel is possible. Therefore, it follows from \cite{part2} that
    \begin{align}
C_{UCR}(P_{XY},W) &= 
  \underset{ \substack{U \\{\substack{U \circlearrow{X} \circlearrow{Y}\\ I(U;X)-I(U;Y) \leq 0}}}}{\max} I(U;X)  \nonumber \\
  &=  \underset{ \substack{U \\{\substack{U \circlearrow{X} \circlearrow{Y}\\ I(U;X)-I(U;Y) \leq C(\bs W)}}}}{\max} I(U;X).  \nonumber
\end{align}
\subsubsection{If $ C(\bs W)>0$} We extend the coding scheme provided in \cite{part2} to arbitary single-user channels. By continuity, it suffices to show that 
$$ \underset{ \substack{U \\{\substack{U \circlearrow{X} \circlearrow{Y}\\ I(U;X)-I(U;Y) \leq C'}}}}{\max} I(U;X)  $$ is an achievable  UCR rate for every $C'<C(\bs W).$
Let $U$ be a random variable satisfying $U \circlearrow{X} \circlearrow{Y}$ and $I(U;X)-I(U;Y) \leq C'$. Let the maximum error probability $\alpha>0.$ Let $\delta,\beta>0.$ We are going to show that $H=I(U;X)$ is an achievable UCR rate. Without loss of generality, assume that the distribution of $U$ is a possible type for block-length $n$.
For some $\mu>0,$ we let
{{\begin{align}
N_{1}&=\lfloor 2^{n[I(U;X)-I(U;Y)+3\mu]} \rfloor \nonumber\\
N_{2}&=\lfloor 2^{n[I(U;Y)-2\mu]}\rfloor. \nonumber
\end{align}}}For each pair $(i,j)$ with $1\leq i \leq N_{1}$ and $1\leq j \leq N_{2}$, we define a random sequence $\bs{U}_{i,j}\in\mathcal{U}^n$ of type $P_{U}$. Let $\mbf M=\bs{U}_{1,1},\hdots, \bs{U}_{N_{1},N_{2}}$  be the joint random variable of all $\bs{U}_{i,j}s.$ We further define the following two sets which depend on $\mbf M$:
\begin{align}
    S_{1}(\mbf M)&=\{(\bs{x},\bs{y}):(\Phi(\bs{x}),\bs{x},\bs{y}) \in \mathcal{T}_{U,X,Y}^{n}\} \nonumber
\end{align} and
\begin{align}
    S_{2}(\mbf M)=\{(\bs{x},\bs{y}):(\bs{x},\bs{y}) \in S_{1}(\mbf M) \ \text{s.t.} \ \exists \ \bs{U}_{i,\ell}\neq\bs{U}_{i,j}=\Phi(\bs{x}) \nonumber \\   \text{jointly} \ UY\text{-typical with} \ \bs{y} \ (\text{with the same first index} \ i)
\}.\nonumber
\end{align}
It is proved in \cite{part2} that
\begin{align}
    \mathbb{E}_{\mbf M}\left[ \mbb P\left[(X^n,Y^n)\notin  S_{1}(\mbf M)\right]+\mbb P\left[(X^n,Y^n)\in  S_{2}(\mbf M)\right]\right]\leq \zeta(n),
    \label{averagebeta}
\end{align}
where $\zeta(n) \leq \frac{\alpha}{2}$ for sufficiently large $n$. 
We choose a realization $\mbf m=\bs{u}_{1,1},\hdots, \bs{u}_{N_1,N_2}$ satisfying:
\begin{align}
\mbb P\left[(X^n,Y^n)\notin  S_{1}(\mbf m)\right]+\mbb P\left[(X^n,Y^n)\in  S_{2}(\mbf m)\right]\leq \zeta(n).
\nonumber \end{align} 
From \eqref{averagebeta}, we know that such a realization exists.
We assume that each $\bs{u}_{i,j}, i=1\hdots N_1, j=1\hdots N_2,$  is known to both terminals.  
This means that  $N_{1}$ codebooks $C_{i}, 1\leq i \leq N_{1}$, are known to both terminals, where each codebook contains $N_{2}$ sequences, $ \bs{u}_{i,j}, \ j=1,\hdots, N_2$. 

Let $\bs{x}$ be any realization of $X^n$ and $\bs{y}$ be any realization of $Y^n.$
\begin{comment}
It holds for every $X$-typical $\bs{x}$ that $$\mbb P[\exists (i,j) \ \text{s.t} \ \bs{U}_{ij} \in \mathcal{T}_{U|X}^{n}\left(\bs{x}\right)|X^{n}=\bs{x}]\geq 1-2^{-2^{nc'}},$$ for a suitable $c'>0,$ as in the proof of  Theorem 4.1 in \cite{part2}. 
\end{comment}
 Let $\Phi(\bs{x})=\bs{u}_{ij}$, if $\bs{u}_{ij}$ is jointly $UX$-typical with $\bs{x}$ (either one if there are several).
 Let $f_1(\bs{x})=i$ if $\Phi(\bs{x}) \in C_{i}$.  If no such a $\bs{u}_{i,j}$ exists, then $f_1(\bs{x})=N_1+1$ and $\Phi(\bs{x})$ is set to a constant sequence $\bs{u}_0$ different from all the ${\bs{u}_{ij}}'s$, jointly $UX$-typical with none of the realizations of $X^n$ and known to both terminals.
  \begin{remark}
 It holds that  $$S_{3}^{c}(\mbf m)\subseteq S_{1}^{c}(\mbf m),$$
 where
 \begin{align}
     S_{3}(\mbf m)=\{\bs{x}: \exists (i,j) \ \text{s.t} \ (\bs{u}_{ij},\bs{x}) \in \mathcal{T}_{U,X}^{n}\}. \nonumber
 \end{align}
 \end{remark}
  Since $ C'<C(\bs W)$, we choose $\mu$ to be sufficiently small such that
      \begin{align}
     \frac{\log \lVert f_1 \rVert}{n}&=\frac{\log(N_1+1)}{n} \nonumber \\
     &\leq C(\bs W)-\mu',
     \label{inequalitylogfSISO}
      \end{align}
for some $\mu'>0,$
 The message $i^\star=f_1(\bs{x})$, with $i^\star\in\{1,\hdots,N_1+1\}$, is encoded to a sequence $\bs{t}$ using a code sequence $(\Gamma^\star_n)_{n=1}^{\infty}$ with rate $\frac{\log \lVert \Gamma^\star_n \rVert}{n}=\frac{\log \lVert f_1 \rVert}{n}$ satisfying \eqref{inequalitylogfSISO}
 and with error probability $e(\Gamma^\star_n)$ satisfying for sufficiently large $n$
 \begin{align}
     e(\Gamma^\star_n) \leq \theta 
 \nonumber \end{align}
where $\theta$ is  a positive constant satisfying $\theta\leq \frac{\alpha}{2}.$ 
  \color{black}
  Here, $\lVert f_1 \rVert$ refers to the cardinality of the set of messages $\{i^\star:i^\star=1,\hdots,N_1+1\}
$.
  From the definition of the transmission capacity, we know that such a code sequence exists. The sequence $\bs{t}$ is sent over the single-user channel. Let $\bs{z}$ be the channel output sequence. Terminal $B$ decodes the message $\tilde{i}^\star$ from the knowledge of $\bs{z}.$
Let $\Psi(\bs{y},\bs{z})=\bs{u}_{\tilde{i}^\star,j}$ if $\bs{u}_{\tilde{i}^\star,j}$ and $\bs{y}$ are jointly $UY$-typical . If there is no such a $\bs{u}_{\tilde{i}^\star,j}$ or there are several, we set $\Psi(\bs{y},\bs{z})=\bs{u}_0$ (since $K$ and $L$ must have the same alphabet).
Now, we are going to show that the requirements in $\eqref{errorcorrelated},$  $\eqref{cardinalitycorrelated},$ \eqref{uniformity} and $\eqref{ratecorrelated}$ are satisfied.
Clearly, (\ref{cardinalitycorrelated}) is satisfied  for $c=I(U;X)+\mu+1$  because
{{\begin{align}
|\mathcal{K}|&=N_1 N_2+1 \nonumber \\
             &\leq  2^{n\left[I(U;X)+\mu\right]}+1 \nonumber \\
             &\leq2^{n\left[I(U;X)+\mu+1\right]}.\nonumber
\end{align}}}We define next for any $(i,j)\in \{1,\hdots,n\}\times\{1,\hdots,n\}$  the set
$$\mc R=\{ \bs{x}\in\mathcal{X}^{n} \ \text{s.t.} \ (\bs{u}_{i,j},\bs{x}) \ \text{jointly} \ UX\text{-typical}\}.$$
Then, it holds that 
\begin{align}
\mbb P[K=\bs{u}_{i,j}] &=\sum_{\bs{x}\in\mc R}\mbb P[K=\bs{u}_{i,j}|X^n=\bs{x}]P_{X}^n(\bs{x}) +\sum_{\bs{x}\in\mc R^c}\mbb P[K=\bs{u}_{i,j}|X^n=\bs{x}]P_{X}^n(\bs{x}) \nonumber \\
&\overset{(\RM{1})}{=}\sum_{\bs{x}\in\mc R}\mbb P[K=\bs{u}_{i,j}|X^n=\bs{x}]P_{X}^n(\bs{x}) \nonumber \\
&\leq \sum_{\bs{x}\in\mc R}P_{X}^n(\bs{x}) \nonumber \\
&=P_{X}^{n}(\{\bs{x}: (\bs{u}_{i,j},\bs{x}) \ \text{jointly} \ UX\text{-typical}\}) \nonumber\\
& = 2^{-nI(U;X)-\kappa(n)}, \nonumber
\end{align}
for some $\kappa(n)>0$ with $\underset{n\rightarrow \infty}{\lim} \frac{\kappa(n)}{n}=0$,
where (\RM{1}) follows because for  $(\bs{u}_{i,j},\mathbf{x})$ being not jointly $UX$-typical, we have $\mbb P[K=\bs{u}_{i,j}|X^n=\bs{x}]=0.$ This yields
{{\begin{align}
H(K)\geq nI(U;X)-\kappa'(n)
\nonumber \end{align}}}
for some $\kappa'(n)>0$ with $\underset{n\rightarrow \infty}{\lim} \frac{\kappa'(n)}{n}=0.$
Therefore, for sufficiently large $n,$ it holds that
\begin{align}
    \frac{H(K)}{n}>H-\delta. \nonumber
\end{align}
Thus, (\ref{ratecorrelated}) is satisfied.
Clearly, it holds that
\begin{align}
    \frac{1}{n}\bigg\lvert H(K)-\log\lvert \mc K \rvert \bigg\rvert \leq \kappa''(n) \nonumber
\end{align}
for some $\kappa''(n)>0$ with $\underset{n\rightarrow\infty}{\lim} \kappa''(n)=0.$ Therefore, for sufficiently large $n,$ $\kappa''(n)\leq \beta.$ This proves \eqref{uniformity}.

 Now, it remains to prove that \eqref{errorcorrelated} is satisfied. For this purpose, we define the following event:
\begin{align}
    \mathcal{D}_{\mbf m}= ``\Phi(X^n) \ \text{is equal to none of the} \  {\bs{u}_{i,j}}'s". \nonumber
\end{align}
We denote its complement by $\mc D_{\mbf m}^{c}.$
We further define $I^\star=f_1(X^n)$ to be the random message generated by Terminal $A$ and  $\tilde{I}^\star$ to be the random message decoded by Terminal $B$. 
We have
\begin{align}
    \mbb P[K\neq L] \nonumber &=\mbb P[K\neq L|I^\star=\tilde{I}^\star]\mbb P[I^\star=\tilde{I}^\star] + \mbb P[K\neq L|I^\star\neq \tilde{I}^\star]\mbb P[I^\star\neq\tilde{I}^\star] \nonumber \\
        &\leq \mbb P[K\neq L|I^\star=\tilde{I}^\star]+ \mbb P[I^\star\neq\tilde{I}^\star].\nonumber
\end{align}
Here,
\begin{align}
    \mbb P[K\neq L|I^\star=\tilde{I}^\star] 
   &= \mbb P[K\neq L|I^\star=\tilde{I}^\star,\mathcal{D}_{\mbf m}]\mbb P[\mathcal{D}_{\mbf m}|I^\star=\tilde{I}^\star] + \mbb P[K\neq L|I^\star=\tilde{I}^\star,\mathcal{D}_{\mbf m}^c]\mbb P[\mathcal{D}_{\mbf m}^c|I^\star=\tilde{I}^\star] \nonumber \\
   &\overset{(\RM{1})}{=}\mbb P[K\neq L|I^\star=\tilde{I}^\star,\mathcal{D}_{\mbf m}^c]\mbb P[\mathcal{D}_{\mbf m}^c|I^\star=\tilde{I}^\star] \nonumber \\
   &\leq \mbb P[K\neq L|I^\star=\tilde{I}^\star,\mathcal{D}_{\mbf m}^c],\nonumber
\end{align}
where $(\RM{1})$ follows from $\mbb P[K\neq L|I^\star=\tilde{I}^\star,\mathcal{D}_{\mbf m}]=0,$ since conditioned on  $I^\star=\tilde{I}^\star$ and $\mathcal{D}_{\mbf m}$, we know that $K$ and $L$ are both equal to $\bs{u}_0$.
It follows that
\begin{align}
    \mbb P[K\neq L] 
    &\leq \mbb P[K\neq L|I^\star=\tilde{I}^\star,\mathcal{D}_{\mbf m}^c]+ \mbb P[I^\star\neq\tilde{I}^\star] \nonumber \\
    &\leq \mbb P\left[(X^n,Y^n)\in  S_{1}^{c}(\mbf m)\cup S_{2}(\mbf m)\right]+\mbb P[I^\star\neq\tilde{I}^\star] \nonumber \\
    &\overset{(a)}{=}\mbb P\left[(X^n,Y^n)\notin  S_{1}(\mbf m)\right]+\mbb P\left[(X^n,Y^n)\in  S_{2}(\mbf m)\right] +\mbb P[I^\star\neq\tilde{I}^\star] \nonumber \\
    &\leq \zeta(n)+ \mbb P[I^\star\neq\tilde{I}^\star],\nonumber\\
    &\leq \zeta(n)+\theta \nonumber \\
    &\leq \alpha.
\nonumber \end{align}
where $(a)$ follows because $S_{1}^{c}(\mbf m)$ and $S_{2}(\mbf m)$ are disjoint. 
This completes the direct proof of Theorem .

\section{Conclusion}
\label{conclusion}
In our work, we studied the problem of UCR generation over arbitrary point-to-point channels. We established a general expression for the UCR capacity that holds for arbitrary single-user channels by making use of the transmission capacity formula elaborated  in \cite{verduhan}.  As a future work, it would be interesting to investigate the problem of CR generation from general sources with one-way communication over arbitrary point-to-point channels.

\vspace{12pt}


\begin{thebibliography}{00}
\bibitem{part2}R. Ahlswede and I. Csiszár, "Common randomness in information theory and cryptography. II. CR capacity," in IEEE Transactions on Information Theory, vol. 44, no. 1, pp. 225-240, Jan. 1998.
\bibitem{identification}R. Ahlswede and G. Dueck, "Identification via channels," in IEEE Transactions on Information Theory, vol. 35, no. 1, pp. 15-29, Jan. 1989.
\bibitem{trafo}R.  Ahlswede,  “General  theory  of  information  transfer:  Updated,”Dis-crete Applied Mathematics, vol. 156, pp. 1348–1388, 05 2008.
\bibitem{Ahlswede2021}R. Ahlswede, Watermarking Identification Codes with Related
Topics on Common Randomness. Cham: Springer International Publishing, 2021, pp. 271–325. [Online]. Available:
https://doi.org/10.1007/978-3-030-65072-8 16
\bibitem{shannon}C.  E.  Shannon,  “A  mathematical  theory  of  communication,” Bell  System  Technical  Journal,  vol.  27,  pp.  379–423,  623–656,  July, October 1948.
\bibitem{Moulin}P. Moulin, “The role of information theory in watermarking and
its application to image watermarking,” Signal Processing, vol. 81,
no. 6, pp. 1121 – 1139, 2001, special section on Information
theoretic aspects of digital watermarking. 
\bibitem{watermarkingahlswede} R.  Ahlswede  and  N.  Cai, ”Watermarking  Identification  Codes  with  Related  Topics  on  Common  Randomness,” Berlin,  Heidelberg: Springer Berlin Heidelberg, pp. 107–153, 2006.
\bibitem{watermarking}Y. Steinberg and N. Merhav, "Identification in the presence of side information with application to watermarking," in IEEE Transactions on Information Theory, vol. 47, no. 4, pp. 1410-1422, May 2001.
\bibitem{applications}H. Boche and C. Deppe, "Secure Identification for Wiretap Channels; Robustness, Super-Additivity and Continuity," in IEEE Transactions on Information Forensics and Security, vol. 13, no. 7, pp. 1641-1655, July 2018.
\bibitem{6Gcomm}G. P. Fettweis and H. Boche, "6G: The Personal Tactile Internet—And Open Questions for Information Theory," in IEEE BITS the Information Theory Magazine, vol. 1, no. 1, pp. 71-82, Sept. 2021.
\bibitem{industrie40}Y. Lu, “Industry 4.0: A survey on technologies, applications and open
research issues,” Journal of Industrial Information Integration, vol. 6,
pp. 1 – 10, 2017.
\bibitem{6Gpostshannon}J.A. Cabrera, H. Boche, C. Deppe, R.F. Schaefer, C. Scheunert, F.H.P. Fitzek,  "6G and the Post-Shannon Theory,"
Shaping Future 6G Networks: Needs, Impacts, and Technologies, 271-294.
\bibitem{semanticsecurity}M. Wiese and H. Boche, "Semantic Security via Seeded Modular Coding Schemes and Ramanujan Graphs," in IEEE Transactions on Information Theory, vol. 67, no. 1, pp. 52-80, Jan. 2021.
\bibitem{6Gandtrustworthiness} G. P. Fettweis and H. Boche, "On 6G and trustworthiness", Communications of the ACM, vol. 65, pp. 48-49, 2022.
\bibitem{researchgroup1} F. Fitzek  and  H. Boche,  “Research landscape – 6G networks research in europe:  6G-life: Digital transformation and sovereignty  of future communication networks,” IEEE Network, vol. 35, no. 6, pp. 4–5, Nov 2021.
\bibitem{researchgroup2}F. Fitzek et. al, “6G Activities in Germany,” IEEE Future Networks, to be published 2022.
\bibitem{part1} R. Ahlswede and I. Csiszár, "Common randomness in information theory and cryptography. I. Secret sharing," in IEEE Transactions on Information Theory, vol. 39, no. 4, pp. 1121-1132, July 1993.
\bibitem{Maurer}U. M. Maurer, "Secret key agreement by public discussion from common information," in IEEE Transactions on Information Theory, vol. 39, no. 3, pp. 733-742, May 1993.
\bibitem{CRgaussian}R. Ezzine, W. Labidi, H. Boche and C. Deppe, "Common Randomness Generation and Identification over Gaussian Channels," GLOBECOM 2020 - 2020 IEEE Global Communications Conference, pp. 1-6, 2020.
\bibitem{practicalgaussian}D. D. N. Bevan, V. T. Ermolayev, A. G. Flaksman, I. M. Averin and P. M. Grant, "Gaussian Channel Model for macrocellular mobile propagation," 2005 13th European Signal Processing Conference, 2005, pp. 1-4.
\begin{comment}
\bibitem{survey}M. Sudan, H. Tyagi and S. Watanabe, "Communication for Generating Correlation: A Unifying Survey," in IEEE Transactions on Information Theory, vol. 66, no. 1, pp. 5-37, Jan. 2020.
\bibitem{capacityAVC}I. Csiszár and P. Narayan, "The capacity of the arbitrarily varying channel revisited: positivity, constraints," in IEEE Transactions on Information Theory, vol. 34, no. 2, pp. 181-193, March 1988.
\bibitem{commitmentcapacity} A. Winter, A. C. A. Nascimento, and H. Imai, “Commitment capacity of discrete memoryless channels,” in Cryptography and Coding, K. G. Paterson, Ed. Berlin, Heidelberg: Springer Berlin Heidelberg, pp. 35–51, 2003.
\bibitem{unconditionallysecure}  R.L. Rivest, “Unconditionally secure commitment and oblivious transfer schemes using private channels and a trusted
initializer," \textit{unpublished manuscript}, 1999.
\bibitem{wafapaper}W. Labidi, C. Deppe and H. Boche, "Secure Identification for Gaussian Channels," ICASSP 2020 - 2020 IEEE International Conference on Acoustics, Speech and Signal Processing (ICASSP), pp. 2872-2876, 2020.
\bibitem{Tse}D. Tse and P. Viswanath, "Fundamentals of Wireless Communication," 2005.
\bibitem{goldsmith}A. Goldsmith, "Wireless Communications," Cambridge University Press," 
2005.
\bibitem{inftheoretic}L. H. Ozarow, S. Shamai and A. D. Wyner, "Information theoretic considerations for cellular mobile radio," in IEEE Transactions on Vehicular Technology, vol. 43, no. 2, pp. 359-378, May 1994.
New York, NY, USA: Cambridge University Press.
\bibitem{infaspects}E. Biglieri, J. Proakis and S. Shamai, "Fading channels: information-theoretic and communications aspects," in IEEE Transactions on Information Theory, vol. 44, no. 6, pp. 2619-2692, Oct. 1998.
\bibitem{telatar}E. Telatar, “Capacity of Multi-antenna Gaussian Channels,” European Transactions on Telecommunications, vol. 10, pp. 585-595, 1999.
\bibitem{NoteShannon} 
C.  E.  Shannon,  “A Note on a Partial Ordering for Communication Channels,” Information and Control Journal,  vol. 1, pp. 390-397, 1958.
\bibitem{discretetimegaussian} W. L. Root and P. P. Varaiya, “Capacity of Classes of Gaussian Channels.” SIAM Journal on Applied Mathematics, vol. 16, no. 6,  pp. 1350–1393, 1968.
\bibitem{errorbound}A. Thomasian, “Error bounds for continuous channels,” inProc. 4th LondonSymp. Inf. Theory.  Washington, DC, pp. 46–60, 1961.
\bibitem{informationbook} R. Ash, "Information Theory," ser. Interscience tracts in pure and applied mathematics.  Interscience Publishers, 1965.
\bibitem{capacityofclassofchannels}D. Blackwell, L. Breiman and A. J. Thomasian "The Capacity of a Class of Channels," The Annals of Mathematical Statistics, Ann. Math. Statist. 30(4), pp. 1229-1241, December 1959.
\bibitem{matrixinversion}D.S. Bernstein, "Matrix Mathematics: Theory, Facts, and Formulas," (Second Edition). Princeton University Press, 2009. 
\bibitem{deterministicfading}M. J. Salariseddigh, U. Pereg, H. Boche and C. Deppe, "Deterministic Identification Over Fading Channels," 2020 IEEE Information Theory Workshop (ITW), pp. 1-5, 2021.
\bibitem{identificationcode}R. Ahlswede and G. Dueck, "Identification in the presence of feedback-a discovery of new capacity formulas," in IEEE Transactions on Information Theory, vol. 35, no. 1, pp. 30-36, Jan. 1989.
\end{comment}
\bibitem{verduhan}S. Verdú and Te Sun Han, "A general formula for channel capacity," in IEEE Transactions on Information Theory, vol. 40, no. 4, pp. 1147-1157, July 1994.
\bibitem{strongconverse}H. Tyagi and S. Watanabe, "Strong Converse Using Change of Measure Arguments," in IEEE Transactions on Information Theory, vol. 66, no. 2, pp. 689-703, Feb. 2020.
\bibitem{codingtheorems}I. Csiszár and J. Körner, Information Theory, "Coding Theorems for
Discrete Memoryless Systems," 2nd ed. Cambridge University Press,
2011.
\end{thebibliography}
\end{document}